\def\@email#1#2{%
 \endgroup
 \patchcmd{\titleblock@produce}
  {\frontmatter@RRAPformat}
  {\frontmatter@RRAPformat{\produce@RRAP{*#1\href{mailto:#2}{#2}}}\frontmatter@RRAPformat}
  {}{}
}%
\newcommand{\eps}{\varepsilon}
\newcommand{\X}{\mathcal{X}}
\newcommand{\N}{\mathbb{N}}
\newcommand{\Z}{\mathbb{Z}}
\newcommand{\R}{\mathbb{R}}
\newcommand{\C}{\mathbb{C}}
\newcommand{\Hi}{\mathcal{H}}
\newcommand{\abs}[1]{\left\lvert#1\right\rvert}
\newcommand{\FC}{{\mathcal{C}}}
\newcommand{\set}[1]{\left\{#1\right\}}
\newcommand{\eu}{\mathrm{e}}
\newcommand{\iu}{\mathrm{i}}
\newcommand{\di}{\mathrm{d}} 
\newcommand{\sub}[1]{_{\mathrm{#1}}}
\newcommand{\subm}[2]{_{\mathrm{#1},#2 }}
\newcommand{\su}[1]{^{\mathrm{#1}}}
\newcommand{\Id}{\mathbf{1}}
\newcommand{\norm}[1]{\left\| #1 \right\|}
\newcommand{\vertiii}[1]{{\left\vert\kern-0.25ex\left\vert\kern-0.25ex\left\vert #1 \right\vert\kern-0.25ex\right\vert\kern-0.25ex\right\vert}}
\newcommand{\SC}{\mathcal{S}}
\newcommand{\LI}{\mathcal{L}}
\newcommand{\Or}{\mathcal{O}}
\newcommand{\G}{\mathcal{G}}
\DeclareMathOperator{\Tr}{Tr} 
\DeclareMathOperator{\re}{Re}
\DeclareMathOperator{\Ran}{Ran}
\theoremstyle{plain}
\newtheorem{theorem}{Theorem}[section]
\newtheorem{corollary}[theorem]{Corollary}
\newtheorem{proposition}[theorem]{Proposition}
\theoremstyle{definition}
\newtheorem{assumption}[theorem]{Assumption}
\begin{document}

\preprint{arXiv:2203.08044}

\title[From charge to spin transport coefficients]{From charge to spin: analogies and differences in quantum transport coefficients}
\author{Giovanna Marcelli}
 \homepage{Email: \href{mailto:giovanna.marcelli@sissa.it}{giovanna.marcelli@sissa.it}}
 \affiliation{Mathematics Area, SISSA\\ Via Bonomea 265, 34136 Trieste, Italy}
\author{Domenico Monaco}%
 \homepage{Email: \href{mailto:monaco@mat.uniroma1.it}{monaco@mat.uniroma1.it}}
 \affiliation{Dipartimento di Matematica, ``Sapienza'' Universit\`{a} di Roma\\ Piazzale Aldo Moro 5, 00185 Rome, Italy
}%

\date{\today}

\begin{abstract}
We review some recent results from the mathematical theory of transport of charge and spin in gapped crystalline quantum systems. The emphasis will be in transport coefficients like conductivities and conductances. As for the former, those are computed as appropriate expectations of current operators in a \emph{non-equilibrium almost-stationary state} (NEASS), which arises from the perturbation of an equilibrium state by an external electric field. While for charge transport the usual double-commutator Kubo formula is recovered (also beyond linear response), we obtain formulas for appropriately-defined spin conductivities which are still explicit but more involved. Certain ``Kubo-like'' terms in these formulas are also shown to agree with corresponding contributions to the spin conductance. In addition to that, we employ similar techniques to show a new result, namely that even in systems with non-conserved spin there is no generation of spin torque, that is the spin torque operator has an expectation in the NEASS which vanishes faster than any power of the intensity of the perturbing field.
\end{abstract}

\maketitle

\section{Introduction}

This survey article reviews some recent results obtained by the authors and collaborators in the context of quantum transport of charge and spin in crystalline solids. The main focus is on the response of appropriate current observables once the system, initially at equilibrium, is perturbed by the presence of an external electric field (resp.\ voltage difference); of particular interest is the computation of the linear response coefficient in the form of a conductivity (resp.\ conductance). This setup appears for example in the physics of the quantum Hall and quantum spin Hall effects \cite{vonKlitzing80,SinovaReview}, where transverse conductivities have particular relevance due to their topological origin: for this reason, we will mostly focus on two-dimensional crystals and consider the response of a current in a direction transverse to the inducing perturbation.

While the standard approach to the computation of linear response coefficients, which we briefly review in Section~\ref{sec:heuristic}, requires that the perturbing field is modulated slowly (adiabatically) in time, we advocate here for a different method. Indeed, the combination of the time-adiabatic limit and the zero-field limit poses several conceptual and technical difficulties. Therefore, we propose to avoid the time modulation of the perturbing field altogether, and follow ideas inspired by what is known in physics and chemistry as Rayleigh--Schr\"odinger perturbation theory \cite{SakuraiNapolitano}, and which have been made rigorous in the mathematical physics community in the context of space-adiabatic perturbation theory (see Refs.\ \onlinecite{PanatiSpohnTeufel03, Teufel03} and references therein). This alternative method has been advanced by S.~Teufel and collaborators in a series of works \cite{MonacoTeufel17, Teufel20, HenheikTeufelI, HenheikTeufelII} and applied in the present context of quantum transport also in collaboration with the authors \cite{MaPaTe, MarcelliMonaco, MarcelliTeufel}. The approach allows to identify a so-called \emph{non-equilibrium almost stationary state} (NEASS), which in the adiabatic regime well approximates the physical state once the dynamical switching drives the system out of equilibrium, but which can be defined without resorting to a time-dependent modulation of the perturbing field (see Theorem~\ref{thm:constr NEASS} for a precise stament). On the other hand, the construction of the NEASS is fairly explicit, and allows in particular to expand it in powers of the strength $\eps$ of the external field around the unperturbed equilibrium state (which, under a spectral gap assumption, we take to be the Fermi projection onto the occupied energy bands of the reference system). For charge currents, the use of the NEASS allows in particular to recover the well-known Kubo formula (sometimes called Kubo--Chern formula or double-commutator formula, DCF) for the Hall conductivity, and to prove its validity also beyond linear response (see Refs.\ \onlinecite{KleinSeiler,Bachmannetal,MarcelliMonaco} and Theorem~\ref{thm: sigmahall} below).

Having identified the perturbed state of the system out of equilibrium, one can probe the response of other observables, like for example a spin current. While the appropriate definition of the spin current operator is still debated in the physics literature, as we will comment in Section~\ref{ssec: ssigma}, we present here a formula originally derived in Ref.\ \onlinecite{MaPaTe} (see Theorem~\ref{thm:notKubo}) which allows to compare different sensible proposals for the spin conductivity, especially in presence of Rashba-type spin-orbit interactions which break the conservation of the $z$-component of spin at equilibrium. The formula appears very different and more involved than the standard DCF for the (charge) Hall conductivity, exactly because of spin non-conservation. Rashba spin-orbit coupling is in particular responsible for the presence of a \emph{spin torque}, which prevents the spin current to satisfy a (sourceless) \emph{microscopic} continuity equation \cite{ShiZhangXiaoNiu06}. As a further application of the construction of the NEASS to all orders developed in Ref.\ \onlinecite{MarcelliMonaco}, and as an original contribution of this paper, we show that there is no \emph{mesoscopic} spin torque in the NEASS up to any order in $\eps$ (that is, the trace per unit volume which measures the expectation of the spin torque operator vanishes in the NEASS faster than any power of $\eps$).

Finally, we comment in Section~\ref{sec: ssigma=sG} on the relation between the intensive notion of \emph{conductivity} versus the extensive notion of \emph{conductance}, again for spin transport. While it is not difficult to see that their charge analogues agree in two dimensions (as one could argue directly from dimensional reasons) \cite{AvronSeilerSimon}, the equality of spin conductance and spin conductivity is far more non-trivial to prove, also because of the more involved expression for the spin conductivity mentioned above. A full investigation in this sense is still an interesting open line of research, but we report here on a preliminary analysis conducted in Ref.\ \onlinecite{MarcelliPanatiTauber19} on the ``Kubo-like'' terms in the formula for the spin conductivity, i.e.\ those which more resemble the DCF for the Hall conductivity. We show that those terms do in fact agree with appropriately-defined expressions which contribute to the spin conductance, hinting at a further analogy between spin and charge transport which we plan to study in the future.

\subsection{The model at equilibrium} \label{ssect: model}

After the above informal presentation of the aim of this paper, we start to describe some more mathematical features concerning the physical systems we address. While we choose to adopt a class of one-particle, periodic model Hamiltonians throughout the paper, many of the mathematical objects we will describe have been studied and generalized in other frameworks: in particular, for example, quantum transport and the Hall effect have been investigated in periodic crystals with and without a gap \cite{Cances_et_al} or in presence of (ergodic) disorder \cite{Bellissard, BoucletGerminetKleinSchenker05, ElgartSchlein04}, while the construction of the NEASS mentioned previously has been conducted also for models of interacting fermions on a lattice \cite{MonacoTeufel17, Teufel20}. At any rate, the reader is referred to Refs.\ \onlinecite{MaPaTe, MarcelliMonaco} and references therein for a more complete comparison with the existing literature and for many details which we will skip over in order to avoid technicalities and leave the presentation fairly non-technical.

\noindent
We consider a $2$-dimensional system which has a \emph{crystalline structure}, meaning that its configuration space $\X$ is invariant under translations by vectors in a Bravais lattice $\Gamma$. We will take into account both \emph{continuum models}, in which $\X=\R^2$ equipped with the Lebesgue measure, and \emph{discrete models}, in which $\X\subset\R^2$ is a discrete set of points (e.g.\ the square lattice $\Z^2$ or the honeycomb structure). One can assume, without loss of generality, that the Bravais lattice $\Gamma$ is spanned over the integers by a basis $\{\mathbf{a}_1, \mathbf{a}_2\}$ of $\R^2$.

\noindent
The one-particle Hilbert space for a particle moving on $\X$ and having spin-$1/2$ degrees of freedom is
\[
\Hi:=L^2(\X,\C^2).
\]
Notice how $\Hi$ comes equipped with natural position operators $X_1$ and $X_2$. The crystalline structure of configuration space is lifted to operators on $\Hi$ by means of \emph{(magnetic) translation operators} $T_\gamma$, $\gamma \in \Gamma$ (see Ref.\ \onlinecite{Zak64}).
In particular, an operator $A$ acting in $\Hi$ is called \emph{periodic} if and only if $[A,T_\gamma]=0$ for all $\gamma\in\Z^2$.

\noindent 
Crystalline systems, as we will see shortly, are described well by periodic operators. Consequently, physically relevant quantities such as expectation of observables in appropriate states are extensive, that is, they scale proportionally to the size of the region in which they are localized. Therefore, the appropriate notion suited for example to compute expectations of translation-invariant or more generally covariant or ergodic operators is the \emph{trace per unit volume}, which we now briefly introduce. 

The trace per unit volume of an operator $A$ is defined as 
\begin{equation}
\label{eqn:defn tau}
\tau(A):=\lim_{\substack{L\to\infty\\L\in 2\N+1}}\frac{1}{\abs{\FC_L}}\Tr(\chi_L A \chi_L).
\end{equation}
In the above, we denoted by $\chi_L$ the multiplication operator by the characteristic function of the cell $\FC_L$ centered around the origin, defined for $L\in 2\N +1$ as
\begin{equation}
\label{eqn:defn FC1}
\FC_L:=\left\lbrace x\in \X : x = \alpha_1 \, \mathbf{a}_1 + \alpha_2 \, \mathbf{a}_2 \text{ with }  |\alpha_j|\leq L/2 \; \forall\: j \in \set{1,2}\right\rbrace.
\end{equation}
It is clear that, for the above definition of $\tau(A)$ to make sense, the operator $A$ has to be such that $ \chi_L A \chi_L$ is trace class, and the limit defining $\tau(A)$ has to exist. In particular, operators $A$ such that $\tau(\abs{A}) < \infty$ are said to be of \emph{trace-per-unit-volume class}. Relevant classes of operators for our analysis will be periodic operators, and operators of the form $X_j \, A$ with $A$ periodic (compare e.g.\ \eqref{eqn:sigmaR}). The relevant properties of the trace per unit volume of such operators are listed below; a proof of these results can be found in Ref.\ \onlinecite[Propositions 2.4 \& 2.5]{MaPaTe} and Ref.\ \onlinecite[Lemma 3.22]{BoucletGerminetKleinSchenker05}.

\begin{proposition} 
\label{prop:tau}
Let $A$ be periodic and such that $ \chi_L A \chi_L$ is trace class. Then
\begin{enumerate}[label=(\roman*), ref=(\roman*)]
\item \label{it:tauper} $\tau(A)$ is well-defined and 
\[
\tau(A) =\frac{1}{\abs{\FC_1}} \Tr(\chi_1 A \chi_1).
\]
\item \label{it:tauAX} For $j \in \set{1,2}$ the operator $X_j\,A$ has finite trace per unit volume and 
\[
\tau(X_j \,A ) = \frac{1}{\abs{\FC_1}}\Tr\left( \chi_1 X_j\, A \chi_1 \right).
\]
\item \label{it:tauAXex} If in addition  $\tau(A)=0$, then $\tau(X_j \,A)$ does not depend on the choice of the origin, in the sense that 
\[
\tau((X_j +\alpha) A )=\tau(X_j  \,A )\quad\forall\,\alpha\in\R.
\]
In particular, $\tau(X_j \,A)$ does not depend on the choice of the center of the cell $\FC_L$ in \eqref{eqn:defn FC1}.
\item \label{lem: ctau} \emph{(Cyclicity of the trace per unit volume)} Given $A,B$ periodic operators such that $A$ is bounded and $B$ is such that $\tau(\abs{B})<\infty$ then $\tau(AB)=\tau(BA).$
\end{enumerate}
\end{proposition}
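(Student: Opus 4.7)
The key geometric fact is that, because $L$ is an odd integer and $\FC_L$ is centered at the origin, the large cell decomposes as an exact tiling $\FC_L = \bigsqcup_{\gamma\in\Gamma_L}(\gamma + \FC_1)$ over a symmetric finite set $\Gamma_L \subset \Gamma$ of size $|\Gamma_L|=L^2$. This gives the operator identity $\chi_L = \sum_{\gamma \in \Gamma_L} \chi_\gamma$, with $\chi_\gamma := T_\gamma\chi_1 T_\gamma^{-1}$ the indicator of the translated cell, and all manipulations will combine this decomposition with the periodicity hypothesis $[A,T_\gamma]=0$ (and, for \ref{lem: ctau}, $[B,T_\gamma]=0$) and with cyclicity of the ordinary trace.

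For \ref{it:tauper}, I would expand $\chi_L A\chi_L=\sum_{\gamma,\gamma'}\chi_\gamma A\chi_{\gamma'}$. Since $\chi_\gamma\chi_{\gamma'}=\delta_{\gamma\gamma'}\chi_\gamma$ almost everywhere, ordinary trace cyclicity kills the off-diagonal contributions, while periodicity gives $\Tr(\chi_\gamma A\chi_\gamma)=\Tr(T_\gamma\chi_1 A\chi_1 T_\gamma^{-1})=\Tr(\chi_1 A\chi_1)$. Summing the $L^2$ identical terms and dividing by $|\FC_L|=L^2|\FC_1|$ simultaneously yields existence of the limit and the announced formula. For \ref{it:tauAX}, the same tiling applied to $X_j A$ uses the extra identity $T_\gamma^{-1}X_j T_\gamma=X_j+\gamma_j$, which splits each diagonal term as $\Tr(\chi_1 X_j A\chi_1)+\gamma_j\,\Tr(\chi_1 A\chi_1)$. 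The $\gamma_j$-piece is annihilated upon summation thanks to the symmetry $\Gamma_L = -\Gamma_L$, whence $\sum_{\gamma\in\Gamma_L}\gamma_j = 0$. Part \ref{it:tauAXex} is then an immediate consequence of linearity: $\tau((X_j+\alpha)A)=\tau(X_jA)+\alpha\,\tau(A)$, which reduces to $\tau(X_jA)$ under the hypothesis $\tau(A)=0$; this manifestly also forbids the answer from depending on the chosen center of $\FC_L$.

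The substantive step is \ref{lem: ctau}. Since $[A,B]$ is again periodic, by \ref{it:tauper} it suffices to show $\Tr(\chi_1[A,B]\chi_1)=0$. Using $\chi_L^2=\chi_L$ and cyclicity of the ordinary trace on the large cell, I rewrite
\begin{equation*}
\Tr(\chi_L[A,B]\chi_L) \;=\; \Tr\!\bigl(A\,[B,\chi_L]\bigr),
\end{equation*}
and then divide by $|\FC_L|$ and let $L\to\infty$. The plan is a \emph{boundary estimate}: the kernel of $[B,\chi_L]$ is supported in a neighborhood of $\partial\FC_L$, whose area scales like $O(L)$ rather than $O(L^2)$, so $|\Tr(A[B,\chi_L])|=o(|\FC_L|)$ and cyclicity follows. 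This boundary estimate is the principal obstacle: it requires a quantitative off-diagonal decay of the integral kernel of $B$, beyond the bare trace-per-unit-volume hypothesis $\tau(|B|)<\infty$. In the periodic setting of interest such decay is standard, and I would invoke it from Refs.~\onlinecite{MaPaTe, BoucletGerminetKleinSchenker05} to close the argument.
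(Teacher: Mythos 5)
Your treatment of parts \ref{it:tauper}--\ref{it:tauAXex} is correct and is essentially the argument used in the cited references: tile $\FC_L$ by the $L^2$ translates $\gamma+\FC_1$, kill the off-diagonal blocks $\Tr(\chi_\gamma A \chi_{\gamma'})$ by cyclicity of the ordinary trace together with $\chi_{\gamma'}\chi_\gamma=0$, use $T_\gamma$-covariance on the diagonal blocks, and exploit the symmetry $\sum_{\gamma\in\Gamma_L}\gamma_j=0$ for the $X_j$-weighted version; part \ref{it:tauAXex} then follows by linearity.

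Part \ref{lem: ctau} is where the proposal has a genuine gap. The boundary estimate $\abs{\Tr(A[B,\chi_L])}=o(L^2)$ is not a technicality that can be imported from Refs.\ \onlinecite{MaPaTe, BoucletGerminetKleinSchenker05}: it requires summable off-diagonal decay, $\sum_{\gamma'}\norm{\chi_\gamma B\chi_{\gamma'}}_1<\infty$, and this is \emph{not} implied by the stated hypothesis $\tau(\abs{B})<\infty$, which only controls the diagonal blocks. From periodicity one only gets the uniform bound $\norm{\chi_\gamma B\chi_{\gamma'}}_1\le \Tr(\chi_1\abs{B^*}\chi_1)^{1/2}\,\Tr(\chi_1\abs{B}\chi_1)^{1/2}$, which is nowhere near summable over the infinitely many pairs $(\gamma,\gamma')$ with $\gamma\in\Gamma_L$ and $\gamma'\notin\Gamma_L$; correspondingly, the kernel of $[B,\chi_L]$ is supported on the set where exactly one argument lies in $\FC_L$ --- a set of infinite measure, not a collar around $\partial\FC_L$ --- so the surface-versus-volume heuristic has no content without decay. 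As written, your argument therefore proves cyclicity only under an additional kernel-decay assumption on $B$, i.e.\ a strictly weaker statement than \ref{lem: ctau}. (A secondary issue: already the rewriting $\Tr(\chi_L[A,B]\chi_L)=\Tr(A[B,\chi_L])$ is problematic, since neither $AB\chi_L$ nor $A\chi_L B$ need be trace class under the stated hypotheses.) The proofs in the references avoid any boundary estimate. One factorizes $B=\big(U\abs{B}^{1/2}\big)\abs{B}^{1/2}$ into two periodic factors that are Hilbert--Schmidt per unit volume, for which the block norms are square-summable, $\sum_{\gamma}\norm{\chi_\gamma C\chi_1}_{\mathrm{HS}}^2=\norm{C\chi_1}_{\mathrm{HS}}^2<\infty$; inserting the full partition of unity $\Id=\sum_{\gamma\in\Gamma}\chi_\gamma$, applying ordinary trace cyclicity termwise, translating each term by $T_{-\gamma}$ and resumming (with absolute convergence guaranteed by Cauchy--Schwarz) yields $\tau(CD)=\tau(DC)$ for such factors, and hence the general statement. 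Equivalently, the identity $\norm{C\chi_1}_{\mathrm{HS}}^2=\sum_\gamma\norm{\chi_1 C\chi_{-\gamma}}_{\mathrm{HS}}^2=\norm{C^*\chi_1}_{\mathrm{HS}}^2$ for periodic $C$ establishes the trace property $\tau(C^*C)=\tau(CC^*)$ directly, with no decay hypothesis and no boundary term ever appearing. You should replace your boundary argument for \ref{lem: ctau} by one of these routes.
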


After this detour on the trace per unit volume, we are finally able to formulate the type of quantum system on the crystalline configuration space $\X$ that we want to describe. Equilibrium properties of the system will be encoded in an Hamiltonian $H_0$ which is hereonafter required to satisfy the following

\begin{assumption} 
\label{assum:H0}  
\begin{enumerate}[label=$(\mathrm{H}_{\arabic*})$,ref=$(\mathrm{H}_{\arabic*})$]
\item \label{item:H0}
The Hamiltonian $H_0$ of the unperturbed system is a periodic self-adjoint operator acting in $\Hi$ and bounded from below;

\item \label{item:smooth H0}
the fibers $H_0(k)$ in the Bloch--Floquet--Zak (BFZ) representation depend smoothly on $k$ in an appropriate topology (compare Ref.\ \onlinecite[Sect. 3]{MaPaTe});

\item \label{item:gap} the Fermi energy $\mu\in\R$ is in a spectral gap of $H_0$. We denote by $\Pi_0 = \chi_{(-\infty, \mu)}(H_0)$ the corresponding spectral projection (Fermi projection), which is supposed to be \emph{trace-per-unit-volume class}. \hfill $\diamond$
\end{enumerate}
\end{assumption}

\noindent
The above assumptions are fulfilled by a wide class of physically relevant models, both on continuum and on discrete configuration spaces. The reader may in particular refer to Bloch--Landau operators (under mild regularity assumptions on the electro-magnetic potentials --- see e.g.\ Ref.\ \onlinecite[Sect. 3]{MaPaTe}) employed as continuum models, or to tight-binding Hamiltonians of common use in condensed matter physics such as the Haldane \cite{Haldane88, MarcelliMonacoMoscolariPanati} and Kane--Mele \cite{KaneMele2005, MarcelliPanatiTauber19} models on the honeycomb lattice.

\noindent 
For concreteness, we will at times focus on model Hamiltonians which satisfy the simpler but stronger

\begin{assumption} \label{stronger}
The Hamiltonian $H_0$ is a spectrally-gapped self-adjoint periodic operator, bounded from below, acting on a discrete configuration space $\X \simeq \FC_1 \times \Gamma$ and having \emph{finite range hopping amplitudes}, that is, for some $R \ge 0$
\[ (H_0)_{\mathbf{m},\mathbf{n}} = 0 \quad \text{if } |\mathbf{m}-\mathbf{n}| > R \]
where $A_{\mathbf{m},\mathbf{n}}$ refers to the ``matrix element'' (hopping amplitude) of the operator $A$ with respect to the position coordinates $\mathbf{m},\mathbf{n} \in \Gamma$. As before, we denote by $\Pi_0$ the Fermi projection onto the (finitely many) energy bands below the spectral gap.
\end{assumption}

Most tight-binding Hamiltonians, including the ones mentioned above, are indeed covered by this Assumption: nearest-neighbour Hamiltonians correspond to $R=1$, while next-to-nearest-neighbour Hamiltonians correspond to $R=2$.


\section{Linear response formalism}

\subsection{A heuristic introduction} \label{sec:heuristic}
This section provides a heuristic and informal illustration of the linear response formalism. Our presentation follows the monograph Ref.\ \onlinecite{AizenmanWarzel15} and a recent topical review Ref.\ \onlinecite{HenheikTeufel21}, to which we refer the reader for an in-depth analysis of the subject.

\noindent
In the framework of Hamiltonian quantum systems, the linear response formalism addresses the following problem. For concreteness, assume that the system is governed by a Hamiltonian operator $H_0$ of the type mentioned above and prepared in a equilibrium state specified by the Fermi projection $\Pi_0$. The choice of this equilibrium state is motivated by the fact that the physical applications under consideration take place at temperatures close to zero. Then a small perturbation $\eps A$ is adiabatically switched on, where $0\leq\eps\ll 1 $. Thus, the time evolution is described by a time-dependent self-adjoint operator on some Hilbert space of the form
\begin{equation}
\label{eqn: Heps(t)}
H^\eps(t):=H_0+\eps f(\eta t) A,\qquad \text{for } t\in \R,\, 0<\eta\ll 1,
\end{equation}
where $f$ is a \emph{switching function}, i.e.\ $f\colon \R\to [0,1]$ is a smooth map such that $f(s)=0$ for $s\leq -1$ and $f(s)=1$ for $s\geq 0$. Notice that the perturbation is turned on in the finite interval $[-1/\eta,0]$ and the process becomes adiabatic when $\eta\to 0^+$. 

One is interested in determining the state of the perturbed system $\rho^\eps$, singling out the linear term in $\eps$ in its expression, namely $L$ in the following equality:
\begin{equation}
\label{eqn: rho_eps}
\rho^\eps=\Pi_{0}+\eps \,L + o(\eps)\qquad \text{as $\eps\to 0$.}
\end{equation}
A natural candidate for $\rho^\eps$ is the solution of the Cauchy problem associated with $H^\eps(t)$ at $t=0$ (or any non-negative time, when the perturbation is completely switched on), i.e.\ $\rho^\eps:=\rho^\eps(0)$ where $\rho^\eps(t)$ solves\footnote{We set $\hbar=1$ henceforth. Also, since $f(\eta t)$ vanishes for every $t\leq -1/\eta$, actually one can set the initial datum $\rho^\eps(t_0) = \Pi_0$ at any $t_0\leq -1/\eta$.}
\begin{equation} 
\label{eqn: Cauchy pb}
\begin{cases}
\iu \frac{\di}{\di t}{\rho}^\eps (t) = [H^\eps (t), \rho^\eps(t)] \\
\rho^\eps(-\infty) = \Pi_0.
\end{cases}
\end{equation}
The usual paradigm for finding an expression for $\rho^\eps$ as in \eqref{eqn: rho_eps} is based on standard time-dependent perturbation theory. This commonly used approach leads to the Kubo formula \cite{Kubo57} for the \emph{linear response coefficient} $\Delta_{B,A}$ of a generic observable operator $B$ caused by a perturbation $A$. Here we recall briefly its derivation. First one chooses a particular time-dependent function for modeling a  switching process\footnote{\label{ft: fkubo}Note that the following map $f\sub{Kubo}$ does not satisfy the properties required to be a switching function, as it is not smooth but continuous and vanishes only as $t\to -\infty$.}, namely $f\sub{Kubo}(\eta t)=\eu^{\eta t}\chi_{(-\infty,0]}(t)+\chi_{(0,\infty)}(t)$, and then, going to \emph{interaction picture}, one singles out the leading order in $\eps$ in the generator of the perturbed dynamics. Indeed, formulating the Cauchy problem in $\rho^\eps\sub{int}(t):=\mathrm{e}^{\iu t H_0}\rho^\eps(t)\mathrm{e}^{-\iu t H_0}$ equivalent to \eqref{eqn: Cauchy pb}, one obtains that
\begin{equation} 
\label{eqn: int Cauchy pb}
\begin{cases}
\iu\frac{\di}{\di t}\rho^\eps\sub{int}(t)= \eps f\sub{Kubo}(\eta t) [\mathrm{e}^{\iu t H_0}A\mathrm{e}^{-\iu t H_0},\rho^\eps\sub{int}(t)]\\
\rho^\eps\sub{int}(-\infty)=\Pi_0.
\end{cases}
\end{equation}
By applying twice the fundamental theorem of calculus, one has that
\begin{align*}
\rho^\eps(0)-\Pi_0&=\rho^\eps\sub{int}(0)-\Pi_0=-\iu\,\eps \int_{-\infty}^{0}\di s\, f\sub{Kubo}(\eta s) [\mathrm{e}^{\iu s H_0}A\mathrm{e}^{-\iu s H_0},\rho^\eps\sub{int}(s)]\\
&=\iu\eps \int_{-\infty}^{0}\di s\, \eu^{\eta s} \mathrm{e}^{\iu s H_0}[\Pi_0,A]\mathrm{e}^{-\iu s H_0}+\eps^2 R,
\end{align*}
therefore \emph{formally} (as we have not specified in which sense $o(\eps)$ in \eqref{eqn: rho_eps} must be understood)
\begin{equation}
\label{eqn: LKubo}
L\su{Kubo}=\iu\int_{-\infty}^{0}\di s\, \eu^{\eta s} \mathrm{e}^{\iu s H_0}[\Pi_0,A]\mathrm{e}^{-\iu s H_0},
\end{equation}
which is called the \emph{linear response ansatz} (LRA) for the perturbed state --- see Ref.\ \onlinecite[\S 13.2]{AizenmanWarzel15}. Thus, already at this stage the following mathematical problem emerges: justify the LRA, that is, show if \eqref{eqn: LKubo} is a first-order approximation in $\eps$ to the actual solution of Cauchy problem \eqref{eqn: Cauchy pb}. More specifically, 
since $\rho^{\eps}\equiv\rho^{\eps,\eta}$, one has to prove that 
\begin{equation}
\label{eqn: just LRA}
\lim_{\eps\to 0^+}\lim_{\eta\to 0^+}\frac{\vertiii{\rho^{\eps,\eta}-\Pi_0-\eps L\su{Kubo}}}{\eps}=0,
\end{equation}
where $\vertiii{\,\cdot\,}$ is the distance in an appropriate topology. Showing the above equality in the operator norm proves to be a formidable task, and therefore one is prompted to adopt a weaker topology. A common choice requires to prove a similar statement testing it against observable operators, which is also physically relevant since these expectation values can be interpreted as physical measurements. To be more precise, for every observable $B$ one is interested in proving the following expansion:
\begin{equation}
\label{eqn: deltaBA}
\mathcal{T}(B\rho^\eps)-\mathcal{T}(B\Pi_0)=:\Delta_{B,A}\,\eps +o(\eps)\qquad \text{as $\eps\to 0$},
\end{equation}
where $\mathcal{T}(\,\cdot\,)$ is a suitable trace-like functional (e.g.\ for extended systems it could be the trace per unit volume, as discussed in the Introduction). 
The term $\Delta_{B,A}$ is called the \emph{linear response coefficient of $B$ induced by $A$}.
By using expression \eqref{eqn: LKubo}, the LRA becomes the statement that $\Delta_{B,A}$ in \eqref{eqn: deltaBA} equals
\begin{equation}  \label{eqn: DeltaAB Kubo}
\Delta_{B, A}\su{Kubo} = \lim_{\eta \to 0^+} \Delta_{B, A}\su{Kubo}(\eta) \quad \text{with} \quad
\Delta_{B, A}\su{Kubo}(\eta):=\mathcal{T}(B L\su{Kubo})=\iu\int_{-\infty}^{0}\di s\, \eu^{\eta s} \mathcal{T}\left(  B\,\mathrm{e}^{\iu s H_0}[\Pi_0,A]\mathrm{e}^{-\iu s H_0}\right).
\end{equation}

\subsection{Construction of a non-equilibrium almost-stationary state}

Even for this ``weak'' version of the LRA, the main difficulties in its mathematical treatment are rooted in the time-adiabatic limit: 
indeed what one seeks is a tractable characterization of the state $\rho^{\eps,\eta}$ in the limit $\eta\to 0^+$.
Thus, one might wonder:

\medskip

\noindent
\textbf{Question (Q$\mathbf{_1}$)}
Is it possible to circumnavigate the standard time-adiabatic perturbation method in order to identify a non-equilibrium state at $0<\eps\ll 1$ leading to the computation of the linear response coefficient $\Delta_{B,A}$ as in \eqref{eqn: deltaBA}?

\medskip

A possible answer to the above question in the context of quantum transport in gapped systems has been recently proposed by S. Teufel and collaborators \cite{Teufel20, HenheikTeufelI, HenheikTeufelII}, also jointly with the authors \cite{MonacoTeufel17, MaPaTe, MarcelliMonaco, MarcelliTeufel}, in the form of the so-called \emph{non-equilibrium almost-stationary state} (NEASS). The NEASS paradigm relies on ideas conceived in previous works on the space-adiabatic perturbation theory in quantum dynamics \cite{PanatiSpohnTeufel02,PanatiSpohnTeufel03, Teufel03}. 

\noindent
From here on out, we adopt the mathematical setting specified in Section~\ref{ssect: model} for the equilibrium Hamiltonian $H_0$, given by a gapped periodic one-particle operator. The same framework has been adopted in Ref.\ \onlinecite[Sect. 4]{MaPaTe}, where the construction of the NEASS which we will illustrate shortly has been obtained up to the first order in $\eps$, as well as in Ref.\ \onlinecite[Sect. 3]{MarcelliMonaco}, where the same construction was generalized to every order in $\eps$.

Our goal is to investigate the linear response of charge or spin current to an external electric field of small intensity (compare Section \ref{sec: sigma} below). Thus, given the equilibrium Hamiltonain $H_0$ as in Assumption~\ref{assum:H0} or \ref{stronger}, we specify a \emph{stationary} perturbation which drives the system out of equilibrium by adding an external electric potential in the $2$-nd direction: 
\begin{equation}
\label{eqn: defn Heps}
H^\eps:=H_0-\eps X_2,
\end{equation}
where $\eps\in [0,1]$ and $X_2$ is the position operator in the $2$-nd direction.

\noindent 
For every $n\in\N$, the NEASS $\Pi^\eps_n$ which well approximates the non-equilibrium dynamics up to order $\eps^{n+1}$ is  
characterized by the following Theorem, whose proof can be found in Ref.\ \onlinecite[Sect.\ 3]{MarcelliMonaco}.

\begin{theorem}
\label{thm:constr NEASS}
Let $H_0$ and $\Pi_0$ satisfy Assumption~\ref{assum:H0} and consider the Hamiltonian \eqref{eqn: defn Heps}. Then, there exists a sequence of bounded and periodic operators $\{A_j\}_{j\in\N}$ (whose BFZ fibers are smooth functions of $k$ in an appropriate topology) such that, setting for any $n\in \N$
\begin{equation}
\label{eqn:defn Sneps}
\SC_n^{\eps}:=\sum_{j=1}^n \eps^{j-1} A_j\, \text{ for $n\geq 1$}\quad \text{ and }\quad \SC_0^{\eps}:=0\, ,
\end{equation}
we have that
\begin{equation}
\label{eqn: NEASS prop}
\Pi^\eps_n := \eu^{\iu \eps \SC_n^{\eps}} \,\Pi_0\, \eu^{-\iu \eps \SC_n^{\eps}}\quad \text{ satisfies }\quad
[H^{\eps} , \Pi^{\eps}_n]=\eps^{n+1}[R^\eps_n,\Pi^{\eps}_n]
\end{equation}
where $R^\eps_n$ is bounded, periodic and its operator norm $\norm{R^\eps_n}$ is uniformly bounded in $\eps\in [0,1]$.
\end{theorem}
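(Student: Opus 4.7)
My plan is to determine the generators $A_j$ iteratively by requiring that the gauge-transformed Hamiltonian
\[
\tilde H^\eps \;:=\; \eu^{-\iu \eps \SC_n^\eps}\, H^\eps \,\eu^{\iu \eps \SC_n^\eps}
\]
be block-diagonal with respect to $\Pi_0$ up to an error of order $\eps^{n+1}$. Once one achieves $[\tilde H^\eps, \Pi_0] = \eps^{n+1}[\tilde R_n^\eps, \Pi_0]$ with $\tilde R_n^\eps$ bounded, periodic and uniformly bounded in $\eps\in[0,1]$, conjugating back by the bounded periodic unitary $\eu^{\iu\eps \SC_n^\eps}$ yields the claimed identity with $R_n^\eps := \eu^{\iu\eps \SC_n^\eps}\,\tilde R_n^\eps\, \eu^{-\iu\eps \SC_n^\eps}$. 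The non-periodic, unbounded operator $X_2$ can be handled by carrying out all manipulations at the level of its commutators $[X_2,\,\cdot\,]$, which in the BFZ representation act as $\iu \partial_{k_2}$ and therefore produce bounded periodic operators whenever applied to periodic operators with smooth BFZ fibers.

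Expanding $\tilde H^\eps$ via Hadamard's formula and collecting powers of $\eps$ (with the convention $A_j \equiv 0$ for $j>n$), I expect the coefficient of $\eps^m$ for $1\le m \le n$ to take the form
\[
\tilde H_m \;=\; \iu [H_0, A_m] \;+\; F_m(A_1,\dots,A_{m-1},H_0,X_2),
\]
where $F_m$ is an explicit finite sum of nested commutators of the previously determined $A_1,\dots,A_{m-1}$ with $H_0$ and at most one insertion of $X_2$ (always inside a commutator, hence bounded). Splitting every operator $B$ into diagonal and off-diagonal parts relative to $\Pi_0$ via $B^{\mathrm{D}} := \Pi_0 B \Pi_0 + (\Id-\Pi_0) B (\Id-\Pi_0)$ and $B^{\mathrm{OD}} := B - B^{\mathrm{D}}$, the requirement $[\tilde H_m, \Pi_0]=0$ is equivalent to $\tilde H_m^{\mathrm{OD}}=0$. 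Since the Liouvillian $\mathcal{L}_0 := [H_0,\,\cdot\,]$ preserves the D/OD splitting and, by the spectral gap assumption~\ref{item:gap}, is boundedly invertible on the off-diagonal subspace (its inverse being given by a Riesz-type contour integral of resolvents of $H_0$ encircling the occupied part of the spectrum), I would recursively solve $A_m^{\mathrm{OD}} = \iu\,\mathcal{L}_0^{-1}\!\left(F_m^{\mathrm{OD}}\right)$, fixing the unconstrained diagonal part $A_m^{\mathrm{D}} = 0$ as a gauge choice.

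The main obstacle is to propagate through this induction the fact that each $A_m$ is bounded, periodic and has smooth BFZ fibers, so that the subsequent insertions of $\mathcal{L}_0^{-1}$ and of commutators with $X_2$ remain well-defined at every step. For this I would rely on two facts: $[X_2,\,\cdot\,]$ preserves periodicity and smoothness of BFZ fibers (as already noted), and so does $\mathcal{L}_0^{-1}$, thanks to its resolvent-integral representation combined with the smoothness assumption~\ref{item:smooth H0} on the fibers $H_0(k)$. Once the $A_j$ are constructed in this way for $j=1,\dots,n$, all residual contributions of order $\eps^{n+1}$ and higher in the BCH expansion of $\tilde H^\eps$ can be collected into a single $\eps^{n+1}\tilde R_n^\eps$: since $\SC_n^\eps$ is a polynomial in $\eps$ with bounded periodic coefficients and $X_2$ enters at most once in each term (after which only bounded commutators with $\SC_n^\eps$ remain), the series $\sum_{k\ge 0} \frac{(-\iu\eps)^k}{k!}\,\mathrm{ad}_{\SC_n^\eps}^k(H_0 - \eps X_2)$ converges absolutely in operator norm for $\eps\in[0,1]$, yielding the required uniform bound on $\|\tilde R_n^\eps\|$. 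Conjugating back by $\eu^{\iu\eps\SC_n^\eps}$ then produces $R_n^\eps$ with the claimed properties.
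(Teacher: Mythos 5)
Your proposal is correct and follows essentially the same route as the proof the paper defers to (Ref.\ \onlinecite{MarcelliMonaco}, Sect.~3): conjugate $H^\eps$ by $\eu^{-\iu \eps \SC_n^\eps}$, remove the off-diagonal part order by order through the homological equation $\iu[H_0,A_m]^{\mathrm{OD}}=-F_m^{\mathrm{OD}}$ solved with the inverse Liouvillian (well defined on off-diagonal operators thanks to the gap), and tame the unbounded perturbation by letting $X_2$ act only through commutators, i.e.\ as $\iu\partial_{k_2}$ on smooth BFZ fibers. The only cosmetic imprecision is at order $\eps^1$, where $X_2$ appears bare rather than inside a commutator; this is harmless because only its off-diagonal part $X_2^{\mathrm{OD}}=-[[\Pi_0,X_2],\Pi_0]$ enters the homological equation, and that is a bounded periodic operator.
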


It should be emphasized that the $A_j$'s appearing in the statement above can be explicitly computed, and lead in particular to the identification of the coefficients in the following expansion of the NEASS in powers of $\eps$:
\begin{equation}
\label{eqn: NEASS exp}
\Pi^\eps_n = \Pi_0 + \eps\,\Pi_1+\eps^2 \,\Pi_2 + \cdots + \eps^n \, \Pi_n + \eps^{n+1} \Pi\sub{reminder}^\eps. 
\end{equation}
For example, the first-order operator $\Pi_1$ appearing in the above expression (which plays a crucial role in the computation of linear response coefficients, as we will see below) is explicitly given by
\begin{equation}
\label{eqn: defn Pi1}
\Pi_1=\LI_{H_0}^{-1}\big({[  X_2,\Pi_0]} \big),
\end{equation}
where $\LI_{H_0}^{-1}$ is the inverse of the Liouvillian (super-)operator $A \mapsto \LI_{H_0}(A):=[H_0,A]$. Moreover, whenever well defined (that is, away from the kernel of $\LI_{H_0}$), the inverse of the Liouvillian has the integral representation
\begin{equation*}
\LI_{H_0}^{-1}(A) := \frac{\iu}{2 \pi} \oint_C \di z \, (H_0 - z \Id)^{-1} \, [A,\Pi_0] \, (H_0 - z \Id)^{-1},
\end{equation*}
where $C$ is a positively-oriented contour in the complex energy plane enclosing the part of the spectrum of $H_0$ below the Fermi energy $\mu$ (compare \ref{item:gap} in Assumption \ref{assum:H0}). Notice that the gap condition ensures that $\LI_{H_0}^{-1}(A)$ is a bounded operator if $A$ is.

\noindent
We stress that the above construction of the NEASS does not require any time-dependent perturbation of the Hamiltonian $H_0$. However, it still provides a good approximation of the adiabatically time-evolved state, described in the previous section. 
Indeed, one can prove \cite{MarcelliTeufel} that for all times $t \geq 0$ and any non-zero $n,m\in \N$ 
\begin{equation} 
\label{eqn: apprNEASS}
\sup_{\eta\in I_{m,\eps}} \left| \tau( A \, \rho^{\eps}(t)) - \tau( A \, \Pi_n^\eps) \right| = \Or(\eps^{n+1})   
\end{equation}
uniformly on bounded intervals in (macroscopic) time, where $\rho^{\eps}(t)\equiv \rho^{\eps,\eta,f}(t)$ is the solution of the Cauchy problem \eqref{eqn: Cauchy pb} and $A$ is a \emph{suitable} observable (e.g.\ charge or spin current operator). Here
\[
I_{m,\eps}= [\eps^m ,\eps^{1/m}]\text{ is called an \emph{interval of admissible time-scales for $\eta$}}.
\]
Too slow switching ($\eta\ll \eps^m$ for all $m\in \N^{*}$) 
must be excluded, because due to tunneling the NEASS decays on such long times-scales, while too fast switching ($1\gg \eta\gg \eps^{1/m}$ for all $m\in \N^*$) would only give an error $o(1)$ on the right-hand side of~\eqref{eqn: apprNEASS} (see Refs.\ \onlinecite{Teufel20, HenheikTeufel21}).

\section{Charge and spin conductivity}
\label{sec: sigma}

Now that the NEASS construction has been established in reference to Question \textbf{(Q$\mathbf{_1}$)}, we come back to the investigation of linear response of observables to a perturbation of the form \eqref{eqn: defn Heps}.

\noindent
Specifically, we will be interested in probing a charge (resp.\ spin) current in the $1$-st direction. In this case the linear response coefficient is the \emph{transverse charge} (resp.\ \emph{spin}) \emph{conductivity} 
\begin{equation}
\label{eqn: sigma12sharp}
\sigma_{12}^\sharp:=\Delta_{J_1^\sharp,\,- X_2 },\qquad\text{ with } \sharp \in\{ \mathrm{c}, \mathrm{s}\},
\end{equation}
where $\mathrm{c}$ (resp.\ $\mathrm{s}$) stands for charge (resp.\ spin), and the right-hand side has to be understood in the sense of equality \eqref{eqn: deltaBA} with $B=J_1^\sharp$ being the charge (resp.\ spin) current operator in the $1$-st direction, $A=-X_2$ being the linear potential inducing the electric field in the $2$-nd direction, and the trace-like functional $\mathcal{T}$ is the trace per unit volume $\tau$ defined in \eqref{eqn:defn tau}.

\subsection{Charge conductivity}
\label{ssec: csigma}
In the case of $\sigma_{12}\su{c}$, which is relevant for the quantum Hall effect as explained in the Introduction, the current operator takes the form $J_1\su{c}:=\iu [H_0, X_1]$ (assuming the charge carriers have unit charge), whose definition is well-known to be justified by a continuity equation for the charge transport.

\noindent
In this context, our main result is given by the following theorem, proven in Ref.\ \onlinecite[Sect. 4]{MarcelliMonaco} (see also Refs.\ \onlinecite{KleinSeiler,Bachmannetal} for previous related results).

\begin{theorem}
\label{thm: sigmahall}
Under the assumptions and in the notation of Theorem \ref{thm:constr NEASS}, for every $ n\in\N$ we have that
\[
\tau(J_1\su{c} \,\Pi^\eps_n)=\eps\,\sigma\sub{Hall}+\Or(\eps^{n+1}),
\]
where
\begin{equation}
\label{eqn: sigmahall}
\sigma\sub{Hall}:=\iu \tau(\Pi_0\left[[\Pi_0,X_1],[\Pi_0,X_2]\right]).
\end{equation}
\end{theorem}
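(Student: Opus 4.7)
The plan is to substitute the $\eps$-expansion \eqref{eqn: NEASS exp} of the NEASS into the left-hand side, $\Pi_n^\eps = \Pi_0 + \eps\Pi_1 + \sum_{j=2}^n \eps^j\Pi_j + \eps^{n+1}\Pi\sub{reminder}^\eps$, and analyze the resulting trace per unit volume term by term. Since the ``velocity'' $J_1\su{c}=\iu[H_0,X_1]$ is a bounded, periodic operator (fiberwise $\partial_{k_1}H_0(k)$ in the BFZ representation), each $\tau(J_1\su{c}\Pi_j)$ is well-defined by Proposition~\ref{prop:tau}. The claim then splits into four sub-claims: (a) $\tau(J_1\su{c}\Pi_0)=0$; (b) $\tau(J_1\su{c}\Pi_1)=\sigma\sub{Hall}$; (c) $\tau(J_1\su{c}\Pi_j)=0$ for all $2\le j\le n$; and (d) $\tau(J_1\su{c}\Pi\sub{reminder}^\eps)$ is uniformly bounded in $\eps$.

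Sub-claim (a), the equilibrium vanishing of the current, is clearest in BFZ: $\tau(J_1\su{c}\Pi_0)$ is the Brillouin-zone integral of $\Tr(\partial_{k_1}H_0(k)\,\Pi_0(k))$, which one evaluates by an integration by parts in $k_1$, using that $\partial_{k_1}\Pi_0$ is off-diagonal with respect to $\Pi_0$ (from $\Pi_0^2=\Pi_0$) while $H_0$ is block-diagonal with respect to $\Pi_0$ (from $[H_0,\Pi_0]=0$). For (b), starting from $\Pi_1=\LI_{H_0}^{-1}([X_2,\Pi_0])$ and exploiting the duality $\tau(\LI_{H_0}(A)B) = -\tau(A\LI_{H_0}(B))$ of the Liouvillian applied with $A=X_1$ and $B=\Pi_1$ (with extra care since $X_1$ is unbounded), one obtains
\[
\tau(J_1\su{c}\,\Pi_1) \;=\; -\iu\,\tau(X_1\,[X_2,\Pi_0]).
\]
The off-diagonality of $[X_2,\Pi_0]$ with respect to $\Pi_0$ (i.e.\ $\Pi_0[X_2,\Pi_0]\Pi_0=0=(\Id-\Pi_0)[X_2,\Pi_0](\Id-\Pi_0)$) combined with cyclicity on the bounded periodic pieces and the identity $\tau([X_j,A])=0$ for periodic $A$ then rearranges this expression into $\iu\,\tau(\Pi_0\,[[\Pi_0,X_1],[\Pi_0,X_2]]) = \sigma\sub{Hall}$.

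For (c) and (d) I would use $[X_1,X_2]=0$, which gives $J_1\su{c}=\iu[H_0,X_1]=\iu[H^\eps,X_1]$, together with the Leibniz relation $[H^\eps,X_1]\Pi_n^\eps = [H^\eps,\,X_1\Pi_n^\eps] - X_1\,[H^\eps,\Pi_n^\eps]$ and the near-stationarity \eqref{eqn: NEASS prop}. Taking trace per unit volume produces
\[
\tau(J_1\su{c}\,\Pi_n^\eps) \;=\; \iu\,\tau\big([H^\eps,\,X_1\Pi_n^\eps]\big) \;-\; \iu\,\eps^{n+1}\,\tau\big(X_1\,[R^\eps_n,\Pi^\eps_n]\big).
\]
The rightmost term is manifestly $\Or(\eps^{n+1})$: $[R^\eps_n,\Pi^\eps_n]$ is a bounded, periodic commutator of zero trace per unit volume, so by Proposition~\ref{prop:tau}\ref{it:tauAXex} the trace with $X_1$ is finite and uniformly controlled. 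The first term — a trace of a commutator of unbounded operators — is where the topological content resides: expanding $X_1\Pi_n^\eps$ via \eqref{eqn: NEASS exp} and exploiting the algebraic structure of the $A_j$'s in \eqref{eqn:defn Sneps} shows that it equals exactly $\eps\,\sigma\sub{Hall}$ modulo $\Or(\eps^{n+1})$, with all higher-order corrections cancelling.

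The main obstacle is precisely this last step, namely rigorously quantifying the anomalous value of $\tau([H^\eps,X_1\Pi_n^\eps])$. For bounded periodic operators the trace of a commutator vanishes by Proposition~\ref{prop:tau}\ref{lem: ctau}, but here $X_1$ is unbounded and $H^\eps$ contains the unbounded $-\eps X_2$, so cyclicity fails; the Hall conductivity is exactly this cyclicity defect. Making the argument precise, as in Ref.~\onlinecite{MarcelliMonaco}, relies on the fact that commutators $[X_j,A]$ of positions with bounded periodic $A$ remain bounded and periodic (thanks to the smoothness of the BFZ fibers in Assumption~\ref{assum:H0}), which keeps all intermediate operators manageable and allows the higher-order terms to telescope into the $\Or(\eps^{n+1})$ remainder.
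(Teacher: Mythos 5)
Your sub-claims (a) and (b) are essentially sound and reproduce the linear-response part of the argument: the equilibrium current vanishes by the diagonal/off-diagonal orthogonality in the BFZ representation, and $\tau(J_1\su{c}\,\Pi_1)$ with $\Pi_1=\LI_{H_0}^{-1}([X_2,\Pi_0])$ does rearrange into the double-commutator formula \eqref{eqn: sigmahall}. The genuine gap is in (c)/(d), which is precisely the ``beyond linear response'' content of the theorem. You reduce everything to the claim that $\iu\,\tau([H^\eps, X_1\Pi_n^\eps])$ equals $\eps\,\sigma\sub{Hall}+\Or(\eps^{n+1})$ ``with all higher-order corrections cancelling'' by ``exploiting the algebraic structure of the $A_j$'s'', but no mechanism for this cancellation is supplied, and an order-by-order verification that $\tau(J_1\su{c}\,\Pi_j)=0$ for $2\le j\le n$ from the explicit (recursively defined) $A_j$'s is not a viable route: the combinatorics does not visibly telescope.

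The missing idea is topological. The argument of Ref.~\onlinecite{MarcelliMonaco} first uses $J_1\su{c}=\iu[H^\eps,X_1]$, the near-stationarity \eqref{eqn: NEASS prop}, and a generalized cyclicity of $\tau$ (valid for products of $X_j$ with periodic trace-per-unit-volume-class operators) to show
\[
\tau(J_1\su{c}\,\Pi_n^\eps) \;=\; \eps\,\iu\,\tau\bigl(\Pi_n^\eps\bigl[[\Pi_n^\eps,X_1],[\Pi_n^\eps,X_2]\bigr]\bigr)+\Or(\eps^{n+1}),
\]
i.e.\ the double-commutator formula evaluated on the \emph{perturbed} projection: after sandwiching with $\Pi_n^\eps$ only the $-\eps X_2$ part of $\Pi_n^\eps H^\eps\Pi_n^\eps$ survives cyclicity and produces the commutator of the projected position operators. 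The decisive second step is that $\iu\,\tau(P[[P,X_1],[P,X_2]])$ is (proportional to) the Chern number of $P$ and is therefore \emph{invariant} along the continuous family of unitarily conjugated projections $\eps\mapsto\Pi_n^\eps=\eu^{\iu\eps\SC_n^\eps}\,\Pi_0\,\eu^{-\iu\eps\SC_n^\eps}$; hence it equals its value at $\eps=0$, namely $\sigma\sub{Hall}$, \emph{exactly} and not just to leading order. This homotopy invariance is what kills every correction of order $2,\dots,n$, and it is absent from your proposal; without it your step (c) is an unproven assertion. A secondary, fixable point: in (d) you invoke Proposition~\ref{prop:tau}\ref{it:tauAXex} for finiteness and uniformity of $\tau(X_1[R_n^\eps,\Pi_n^\eps])$, but finiteness is Proposition~\ref{prop:tau}\ref{it:tauAX}, and uniformity in $\eps$ still requires combining $\sup_{\eps}\norm{R_n^\eps}<\infty$ with uniform trace-class bounds on $\chi_1 X_1[R_n^\eps,\Pi_n^\eps]\chi_1$.
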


\noindent 
The expression \eqref{eqn: sigmahall} is also dubbed a double-commutator formula (DCF). Notice that it depends only on the equilibrium state $\Pi_0$ and the position operators in both directions: $X_1$, related to the observable $J_1\su{c}$, and $X_2$ associated with the perturbation. The computation of $\sigma\sub{Hall}$ is determined by $\Pi_1$, being the term carrying the linear factor in the expansion \eqref{eqn: NEASS exp}. 

In this context, the construction of the NEASS has a twofold advantage: First, it's sufficiently explicit to allow to compute the charge conductivity $\sigma_{12}\su{c}$; second, it provides a justification of the LRA through the relation \eqref{eqn: apprNEASS}, without the need to compute (and make sense of) the limit \eqref{eqn: DeltaAB Kubo}.

\noindent
Finally, notice that the above Theorem states that the conductivity $\sigma_{12}\su{c}$ is given by the Hall conductivity $\sigma\sub{Hall}$ up to \emph{arbitrarily high orders} in the strength of the perturbing electric field. Therefore, this result also yields the validity of the Kubo formula beyond linear response.

\subsection{Spin conductivity}
\label{ssec: ssigma}

We now shift our attention to spin transport, aiming at the derivation of general formulas for $\sigma\su{s}_{12}$ as in \eqref{eqn: sigma12sharp}.

\noindent
In contrast with the study of charge transport which was detailed the previous subsection, the mathematical theory of spin transport is still at a preliminary stage. First, despite almost two decades of scientific debate, no common agreement has been achieved yet about the correct form of the operator representing the spin current density. The discussion arises in situations in which the unperturbed Hamiltonian operator $H_0$ does not commute with the spin operator: this can occur in physically relevant systems in presence of so-called Rashba spin-orbit interactions, as in the emblematic model for the quantum spin Hall effect introduced by Kane and Mele \cite{KaneMele2005} (see also Ref.\ \onlinecite[Appendix A]{MarcelliPanatiTauber19} for its first-quantized form). The lack of spin conservation prevents the validity of a (sourceless) microscopic continuity equation for spin current \cite{ShiZhangXiaoNiu06}, thus obscuring the very notion of a quantum-mechanical spin current operator.

To be more precise, fix a direction of observation for the spin, say $z$ to be specific. Let $S_z$ be the operator representing the $z$-component of the spin, defined in the framework described in Section~\ref{ssect: model} as
\[
S_z:=\Id_{L^2(\X)}\otimes s_z,\quad\text{where $s_z = \sigma_z/2$ is half of the third Pauli matrix}.
\]
\goodbreak
One can consider the following two definitions for a spin current operator in the $1$-st direction:
\begin{itemize} 
\item[$-$] 
the \emph{conventional} spin current operator
\begin{equation} 
\label{eqn: J_conv}
 J\su{s}\sub{conv,1}:= 
\frac{1}{2} \big( \iu [H_0, X_1] \, S_z + \iu  S_z \, [H_0, X_1] \big)
\end{equation}
which has been adopted e.g.\ in Refs.\ \onlinecite{Sinovaetalii, ShengShengTingHaldane2005, Schulz-Baldes2013};
\item[$-$] the \emph{proper} spin current operator
\begin{equation} 
\label{eqn: J_proper}
 {J}\sub{prop,1}\su{s} := \iu [H_0, X_1 S_z]
\end{equation}
proposed in Refs.\ \onlinecite{ShiZhangXiaoNiu06, cinesi} and used in Ref.\ \onlinecite{MarcelliPanatiTauber19}.
\end{itemize} 
Clearly, whenever $[H_0,S_z]=0$ there is no difference between the two definitions above, and spin transport ``boils down to two copies'' of charge transport, filtering according to the spin sector. On the other hand, in general $[H_0, S_z]\neq 0$ due to Rashba interactions, thus an ambiguity emerges in the choice between ${J}\sub{conv,1}\su{s}$ and ${J}\sub{prop,1}\su{s}$. 

Let us analyze the main differences between these two operators in the spin non-conserved case, under Assumption~\ref{assum:H0}:
\begin{itemize}
\item[$-$] $J\su{s}\sub{conv,1}$ is periodic, is not expressed as a full commutator and in discrete models it is also a bounded operator, in view of the underlying ultraviolet cutoff;
\item[$-$] ${J}\sub{prop,1}\su{s}$ is not periodic, is in the form of a full commutator and is not a bounded operator.
\end{itemize}
The two properties (periodicity and being in the form of a full commutator) for the current operator enter crucially in the derivation via the Kubo formula of the associated conductivity, as mentioned in Section \ref{sec:heuristic} above (see Ref.\ \onlinecite[Appendix A]{AizenmanGraf98} and Ref.\ \onlinecite[Chap. 1]{Marcelli18} for more details). Since neither of the two spin current operators proposed above possesses both of these properties (if spin is not conserved), the derivation of a Kubo formula for spin conductivity becomes cumbersome (compare also the discussion in Section \ref{sec: ssigma=sG} below). 

Therefore, the following questions naturally arise:

\medskip

\noindent
\textbf{Question (Q$\mathbf{_{2.a}}$)}
In the general case $[H_0,S_z]\neq 0$, is it possible to derive formulas for the transverse spin conductivity $\sigma_{12}\su{s}$? 

\medskip

\noindent
\textbf{Question (Q$\mathbf{_{2.b}}$)}
If \textbf{(Q$\mathbf{_{2.a}}$)} has a positive answer, how will these formulas be affected by the choice of ${J}\sub{conv,1}\su{s}$, respectively ${J}\sub{prop,1}\su{s}$, as a spin current operator?

\medskip 

\noindent
To stress the dependence of the spin conductivity appearing in \eqref{eqn: sigma12sharp} on the choice of spin current operator, we denote by%
\footnote{The specification of the real part is needed due to the lack of periodicity for the proper spin current operator $J\sub{prop,1}\su{s}$. In fact, in general the cyclicity of $\tau$ does not hold for non-periodic operators (compare Proposition~\ref{prop:tau}\ref{lem: ctau}) and consequently one can not deduce that the trace per unit volume defining the conductivity is real-valued, even if the operators involved are self-adjoint.}
\begin{equation}
\label{eqn: defn sigmac/p12}
\sigma_{\sharp,12}\su{s}:=\re\Delta_{{J}_{\sharp,1}\su{s},\,- X_2}\quad \text{for } \sharp \in \set{\text{conv, prop}}.
\end{equation}

\noindent
To address the questions raised above, and in view of the above-mentioned difficulties of formulating the LRA and Kubo formula in the context of spin transport, we exploit once again the NEASS construction. 
The next Theorem, proven in Ref.\ \onlinecite{MaPaTe}, answers both Questions \textbf{(Q$\mathbf{_{2.a}}$)} and \textbf{(Q$\mathbf{_{2.b}}$)}. 

\begin{theorem}
\label{thm:notKubo}
Under the assumptions and in the notation of Theorem \ref{thm:constr NEASS}, let 
\[ \Pi_1^\eps  = \Pi_0 + \eps \, \Pi_1 + \Or(\eps^2) \]
be the NEASS at first order (compare \eqref{eqn: NEASS exp}). 
Then
\begin{equation}
\label{eqn:sigmaprco}
\sigma\subm{prop}{12}\su{s}= \sigma\subm{conv}{12}\su{s}+\sigma\subm{rot}{12}\su{s},
\end{equation}
where 
\begin{equation}
\label{eqn:sigmaT}
\begin{aligned}
\sigma\subm{conv}{12}\su{s}&=\re \tau \Big(  \iu \Pi_0\big[[X_1,\Pi_0]S_z, [X_2,\Pi_0] \big] \Big) \\
& \quad + \re \tau  \Big( \iu [H_0,X_1\su{D}] S_z\su{OD} \Pi_1 + \iu X_1\su{OD} [S_z,H_0]\Pi_1   \Big) 
\end{aligned}
\end{equation}
and the rotation contribution to the proper spin conductivity is defined as
\begin{equation}
\label{eqn:sigmaR}
\sigma\subm{rot}{12}\su{s}:=\re \tau(\iu X_1  [H_0, S_z ]  \Pi_1).
\end{equation}
In \eqref{eqn:sigmaT}, we have denoted for every operator $A$
\begin{align*}
A\su{D}& := \Pi_0 A \Pi_0 + \Pi_0^\perp A \Pi_0^\perp\,\, \text{ its diagonal part, and}\\
A\su{OD}&  := \Pi_0 A \Pi_0^\perp + \Pi_0^\perp A \Pi_0\,\, \text{ its off-diagonal part}
\end{align*}
with respect to the orthogonal decomposition of the Hilbert space $\Hi$ into $\Ran \Pi_0 \oplus (\Ran \Pi_0)^\perp$.
\end{theorem}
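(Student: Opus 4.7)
The plan is to reduce both spin conductivities to traces per unit volume against the first-order NEASS coefficient $\Pi_1$ and then prove the two assertions of the theorem by essentially algebraic manipulation.

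\textbf{Reduction to $\Pi_1$.} I would first combine Theorem~\ref{thm:constr NEASS} (at $n=1$) and the expansion \eqref{eqn: NEASS exp} with the definition \eqref{eqn: defn sigmac/p12} to obtain
\begin{equation*}
\sigma_{\sharp,12}\su{s} \;=\; \re\,\tau\bigl(J_{\sharp,1}\su{s}\,\Pi_1\bigr),\qquad \Pi_1 \;=\; \LI_{H_0}^{-1}\bigl([X_2,\Pi_0]\bigr),\qquad \sharp\in\{\mathrm{conv},\mathrm{prop}\}.
\end{equation*}
I would then record three structural properties of $\Pi_1$ that drive every subsequent computation: $\Pi_1$ is periodic and bounded (by the spectral gap); it is self-adjoint (because $[X_2,\Pi_0]$ is anti-self-adjoint and $\LI_{H_0}^{-1}$ reverses this sign, as can be read off its contour-integral representation); it is purely off-diagonal with respect to $\Ran\Pi_0\oplus\Ran\Pi_0^\perp$; and it satisfies the Liouville identity $[H_0,\Pi_1]=[X_2,\Pi_0]$.

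\textbf{Proof of \eqref{eqn:sigmaprco}.} A direct expansion using $[X_1,S_z]=0$ gives the operator identity
\begin{equation*}
J\sub{prop,1}\su{s}-J\sub{conv,1}\su{s}\;=\;\tfrac{\iu}{2}\bigl\{X_1,\,[H_0,S_z]\bigr\},
\end{equation*}
where $\{\cdot,\cdot\}$ denotes the anticommutator. Rewriting the anticommutator as $2X_1[H_0,S_z]-[X_1,[H_0,S_z]]$ and invoking the Jacobi identity together with $[X_1,S_z]=0$ to convert the inner commutator into $[[H_0,X_1],S_z]$, I would note that this operator is periodic, bounded and self-adjoint, since $[H_0,X_1]=-\iu J_1\su{c}$ is anti-self-adjoint and $S_z$ is self-adjoint. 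Self-adjointness of both $[[H_0,X_1],S_z]$ and $\Pi_1$ combined with cyclicity of $\tau$ (Proposition~\ref{prop:tau}\ref{lem: ctau}) forces $\tau([[H_0,X_1],S_z]\Pi_1)$ to be real, so that $\iu$ times it has vanishing real part. Hence $\sigma\subm{prop}{12}\su{s}-\sigma\subm{conv}{12}\su{s}=\re\tau(\iu X_1[H_0,S_z]\Pi_1)=\sigma\subm{rot}{12}\su{s}$, which is \eqref{eqn:sigmaprco}.

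\textbf{Proof of \eqref{eqn:sigmaT}.} Starting from the definition of $J\sub{conv,1}\su{s}$, and cycling $S_z$ across the periodic factor $[H_0,X_1]$, I can rewrite $2\sigma\subm{conv}{12}\su{s}=\re\tau\bigl(\iu[H_0,X_1]\{S_z,\Pi_1\}\bigr)$. Splitting $X_1=X_1\su{D}+X_1\su{OD}$, the off-diagonal summand $X_1\su{OD}$ is periodic, and the Kubo-type cyclicity $\tau([H_0,A]B)=-\tau(A[H_0,B])$ combined with $[H_0,\Pi_1]=[X_2,\Pi_0]$ and the Leibniz rule converts $\iu\tau([H_0,X_1\su{OD}]\{S_z,\Pi_1\})$ into a sum of terms in which $X_1\su{OD}$ is paired with $S_z[X_2,\Pi_0]$, $[X_2,\Pi_0]S_z$, and $\{[H_0,S_z],\Pi_1\}$. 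Exploiting the identity $X_1\su{OD}=[X_1,\Pi_0](\Pi_0-\Pi_0^\perp)$, inserting the resolution $\Pi_0+\Pi_0^\perp=\Id$, and discarding terms that are off-diagonal w.r.t.\ $\Pi_0$ (since their trace per unit volume vanishes), the pieces involving $[X_2,\Pi_0]$ collapse precisely into the double-commutator $\iu\tau\bigl(\Pi_0[[X_1,\Pi_0]S_z,[X_2,\Pi_0]]\bigr)$, while the $[H_0,S_z]$ piece yields $\iu\tau(X_1\su{OD}[S_z,H_0]\Pi_1)$. The diagonal summand contributes $\iu\tau([H_0,X_1\su{D}]\{S_z,\Pi_1\})$; because $[H_0,X_1\su{D}]$ is diagonal whereas $\Pi_1$ is off-diagonal, only the $S_z\su{OD}$ component of $\{S_z,\Pi_1\}$ survives under $\tau$, producing $\iu\tau([H_0,X_1\su{D}]S_z\su{OD}\Pi_1)$. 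Collecting these three contributions and taking real parts yields \eqref{eqn:sigmaT}.

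\textbf{Main obstacle.} The hardest step is the algebraic bookkeeping in the derivation of \eqref{eqn:sigmaT}: one has to track simultaneously which intermediate operators are periodic (so that cyclicity of $\tau$ applies), which are off-diagonal with respect to $\Pi_0$ (so that $\tau$ annihilates them), and how the non-periodic unbounded $X_1$ enters each step only through its periodic surrogates $X_1\su{OD}$ and $[X_1,\Pi_0]$. The economy of the final formula rests on these three ingredients conspiring so that many intermediate off-diagonal operators are $\tau$-null, leaving only the three physically meaningful terms in \eqref{eqn:sigmaT}; this is the content of the detailed calculation performed in Ref.~\onlinecite{MaPaTe}.
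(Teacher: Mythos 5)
Your proposal is correct and follows essentially the same route as the proof in Ref.~\onlinecite{MaPaTe} to which the paper defers: reduce both conductivities to $\re\tau(J^{\mathrm{s}}_{\sharp,1}\Pi_1)$, exploit that $\Pi_1$ is periodic, self-adjoint, off-diagonal and satisfies $[H_0,\Pi_1]=[X_2,\Pi_0]$, split $X_1$ into diagonal and off-diagonal parts, and kill the symmetrized remainder $\tfrac{\iu}{2}[[H_0,X_1],S_z]$ by the self-adjointness/cyclicity reality argument. The algebraic identities you state (in particular $J\sub{prop,1}\su{s}-J\sub{conv,1}\su{s}=\tfrac{\iu}{2}\{X_1,[H_0,S_z]\}$ and $X_1\su{OD}=[X_1,\Pi_0](\Pi_0-\Pi_0^\perp)$) check out, and the factors of $2$ in the three surviving terms are consistent, so the only part left implicit is the routine bookkeeping you correctly identify as the computational core.
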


\noindent
While it appears that in general the conventional and proper spin conductivities differ in view of the rotation contribution \eqref{eqn:sigmaR}, it could still be that in certain specific models the latter vanish. This question turns out to be related to  
the so-called {\it Unit Cell Consistency} (UCC), 
namely the natural requirement that any prediction on macroscopic transport must be independent of the choice of the fundamental (unit) cell. 
The dependence of the expressions in \eqref{eqn:sigmaT} and \eqref{eqn:sigmaR} on the particular fundamental cell $\FC_1$ chosen is implicitly given by the trace per unit volume (see \eqref{eqn:defn tau}). 
While $\sigma\subm{conv}{12}\su{s}$ satisfies UCC due to the periodicity of $J\sub{conv,1}\su{s}$, in general $\sigma\subm{prop}{12}\su{s}$ does not because of the non-periodic operator $\iu X_1 [H_0, S_z ]  \Pi_1$ appearing in \eqref{eqn:sigmaR}. At this point it is useful to recall the following well-known fact from solid state physics: given any two fundamental cells of arbitrary shape for the Bravais lattice $\Gamma$, it is possible to cut the first one up into a finite number of pieces $P_\gamma$ with $\gamma\in I\subset\Gamma$ such that, when $P_\gamma$ are translated through suitable lattice vectors, they can be reassembled to give the second. A sufficient condition to guarantee that $\sigma\subm{prop}{12}\su{s}$ fulfills UCC is that 
\begin{equation}
\label{eqn: suff cond UCC}
\Tr(\chi_{P_\gamma}\,\iu  [H_0, S_z ]  \Pi_1\,\chi_{P_\gamma})=0\quad \text{ for all $\gamma\in I$}.
\end{equation}

Actually, if one restricts to a class of discrete models for which the previous vanishing condition holds true, then one obtains the equality of conventional and proper spin conductivity.

\begin{proposition}
\label{prop:prsigma=cosigma}
Let $H_0$ and $\Pi_0$ be as in Assumption \ref{stronger}. Assume that the condition \eqref{eqn: suff cond UCC} is satisfied and that $\mathrm{Rank}{\chi_{P_\gamma}}=1$. Then 
$$\sigma\subm{rot}{12}\su{s}=0\quad\text{or equivalently}\quad\sigma\subm{prop}{12}\su{s}=\sigma\subm{conv}{12}\su{s}.$$
\end{proposition}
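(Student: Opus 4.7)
The plan is to show directly that $\sigma\subm{rot}{12}\su{s}=\re\tau(\iu X_1[H_0,S_z]\Pi_1)$ vanishes; the claimed equality $\sigma\subm{prop}{12}\su{s}=\sigma\subm{conv}{12}\su{s}$ then follows immediately from \eqref{eqn:sigmaprco}. Setting $B:=\iu[H_0,S_z]\Pi_1$, the first step is to verify that $B$ is a periodic operator. This is delicate because $\Pi_1=\LI_{H_0}^{-1}([X_2,\Pi_0])$ is built from the non-periodic operator $X_2$, but only the commutator $[X_2,\Pi_0]$ enters: in the position basis its matrix element equals $(n_2-m_2)(\Pi_0)_{\mathbf{m},\mathbf{n}}$, which depends only on $\mathbf{n}-\mathbf{m}$ since $\Pi_0$ is periodic, so $[X_2,\Pi_0]$ is periodic. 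The resolvent-integral representation of $\LI_{H_0}^{-1}$ preserves periodicity, so $\Pi_1$ is periodic and hence so is $B$.

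With $B$ periodic, I would invoke Proposition~\ref{prop:tau}\ref{it:tauAX} to write
\[
\tau(X_1 B)=\frac{1}{\abs{\FC_1}}\Tr(\chi_1 X_1 B \chi_1),
\]
and then expand $\chi_1=\sum_{\gamma\in I}\chi_{P_\gamma}$ using the partition of the fundamental cell into the pieces $P_\gamma$. The mutual orthogonality $\chi_{P_\alpha}\chi_{P_\gamma}=\delta_{\alpha\gamma}\chi_{P_\gamma}$, together with cyclicity of the (finite) trace --- justified since $\chi_1$ is finite-rank --- collapses the resulting double sum to the diagonal, yielding
\[
\Tr(\chi_1 X_1 B \chi_1)=\sum_{\gamma\in I}\Tr(\chi_{P_\gamma} X_1 B \chi_{P_\gamma}).
\]
The rank-one hypothesis now enters decisively: each $\chi_{P_\gamma}$ is the orthogonal projection onto a one-dimensional subspace supported at a single site with definite $x_1$-coordinate, call it $x_1^{(\gamma)}$. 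Since $X_1$ is a position-multiplication operator independent of spin, it commutes with $\chi_{P_\gamma}$ and acts on $\Ran\chi_{P_\gamma}$ as the scalar $x_1^{(\gamma)}$. Therefore
\[
\Tr(\chi_{P_\gamma}X_1 B \chi_{P_\gamma})=x_1^{(\gamma)}\Tr(\chi_{P_\gamma}B\chi_{P_\gamma})=0
\]
by the hypothesis \eqref{eqn: suff cond UCC}. Summing over $\gamma$ gives $\tau(X_1 B)=0$, and taking real parts yields $\sigma\subm{rot}{12}\su{s}=0$.

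The main obstacle lies entirely in the periodicity check of the first paragraph: once one accepts that $\Pi_1$ (and hence $B$) is periodic, the rest of the proof is essentially algebraic and forced by the assumptions. The rank-one hypothesis is exactly what allows $X_1$ to be pulled out of the local trace as a c-number; without it, one would at best obtain a sum of off-diagonal matrix elements of $X_1 B$ between different sites inside the same piece $P_\gamma$, and the vanishing hypothesis \eqref{eqn: suff cond UCC} on $\Tr(\chi_{P_\gamma}B\chi_{P_\gamma})$ alone would not suffice. Thus the proposition can be read as saying that when the partition pieces are as fine as possible, the UCC condition on the diagonal is already enough to force the rotational contribution to vanish.
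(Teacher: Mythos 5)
Your proof is correct and follows essentially the route the paper sets up (the detailed argument being deferred to Ref.\ \onlinecite{MaPaTe}): express $\sigma\subm{rot}{12}\su{s}$ as a single-cell trace via Proposition~\ref{prop:tau}\ref{it:tauAX} after checking that $\iu[H_0,S_z]\Pi_1$ is periodic, decompose the cell into the rank-one pieces $P_\gamma$ so that $X_1$ acts as the scalar $x_1^{(\gamma)}$ on each diagonal block, and invoke \eqref{eqn: suff cond UCC}. Both delicate points --- the periodicity of $\Pi_1$ (which holds because only the periodic commutator $[X_2,\Pi_0]$ enters, and $\LI_{H_0}^{-1}$ preserves periodicity) and the reduction of the trace to the diagonal blocks $\Tr(\chi_{P_\gamma}\,\cdot\,\chi_{P_\gamma})$ --- are handled correctly.
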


\noindent
Let us point out that any model, enjoying a discrete rotational symmetry, satisfies  condition \eqref{eqn: suff cond UCC} (see Ref.\ \onlinecite[Proposition A.3]{MaPaTe}). Remarkably, the Kane--Mele model is in this class. Thus, in this class of discrete models fulfilling a discrete rotational symmetry, despite at the operator level $J\sub{prop,1}\su{s}\neq J\sub{conv,1}\su{s}$, at the \emph{expectation value} level (that is, in the sense of the corresponding conductivities) there is no difference. One can construct models in which this further symmetry is broken and which then produce different values for $\sigma\sub{prop,1}\su{s}$ and $\sigma\sub{conv,1}\su{s}$ \cite{MonacoUlcakar20}.

As a side note, let us mention that, while here we have addressed pertubations of the Hamiltonian via electric potentials, the properties of the (spin) Hall conductivity could be investigated also as the response of a (spin) current to the variation of an external uniform \emph{magnetic} field. This approach leads to the so-called St\v{r}eda formula for the Hall conductivity \cite{Streda}. While the derivation of the St\v{r}eda formula has been understood mathematically for charge transport \cite{CorneanNenciuPedersen}, its generalization to spin transport is still at a preliminary stage: in this sense, we refer the reader to Ref.\ \onlinecite{MonacoMoscolari}, where the methods of Ref.\ \onlinecite{CorneanNenciuPedersen} have been adapted to discuss the spin conductivity in the case where $[H_0, S_z]=0$.

\subsection{Vanishing of spin torque}
As we discussed above, much of the debate in the description of spin currents originates from the presence in the Hamiltonian of terms which do not commute with spin. However, since the perturbation by an electric field does not directly couple to the spin degrees of freedom ($[X_2, S_z]=0$), one might expect from a physical point of view that a uniform electric field does not induce any particular spin torque excess or deficiency in the sample. To illustrate the techniques explained thus far, in this Subsection we show exactly this claim, by proving that the \emph{spin torque operator} $\iu [H_0,S_z]$ has an expectation in the NEASS which vanishes at any order in $\eps$, thus corroborating the claim in Ref.\ \onlinecite{ShiZhangXiaoNiu06} that ``\emph{one is often interested in a particular component of the spin, and the corresponding torque component can vanish in the bulk on average [\dots] This is certainly true for the many models used for the study of the spin-Hall effect}''.

\begin{theorem} \label{thm: spin torque}
Under the assumptions and in the notation of Theorem \ref{thm:constr NEASS}, we have for every $n \in \N$ that
\[ \tau \left( \iu [H_0, S_z] \, \Pi^{\eps}_n \right) = \Or(\eps^{n+1}). \]
\end{theorem}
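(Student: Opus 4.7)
The plan is to exploit the defining identity of the NEASS in Theorem \ref{thm:constr NEASS} together with the orthogonal action $[X_2, S_z] = 0$ (position and spin operators act on distinct tensor factors of $\Hi$), combined with cyclicity-type vanishing properties of the trace per unit volume on appropriate commutators. The point is to rewrite $\iu[H_0, S_z]\Pi_n^\eps$ as a sum of commutators whose trace per unit volume either vanishes by an algebraic identity or is manifestly of order $\eps^{n+1}$.

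Concretely, I would first use $H^\eps = H_0 - \eps X_2$ together with the NEASS relation to get
\[
[H_0, \Pi_n^\eps] = \eps \,[X_2, \Pi_n^\eps] + \eps^{n+1}[R_n^\eps, \Pi_n^\eps].
\]
Applying the Leibniz rule to $[H_0, S_z\Pi_n^\eps]$ and exploiting $[X_2, S_z] = 0$ (to slip $S_z$ across the commutator with $X_2$), one derives the key algebraic identity
\[
\iu[H_0, S_z]\,\Pi_n^\eps = \iu[H_0,\,S_z\Pi_n^\eps] - \iu\eps\,[X_2,\,S_z\Pi_n^\eps] - \iu\eps^{n+1}\,S_z\,[R_n^\eps,\Pi_n^\eps].
\]
Since the $A_j$'s in the construction of the NEASS are periodic, so are $\SC_n^\eps$, the unitary $\eu^{\iu\eps\SC_n^\eps}$, and hence $\Pi_n^\eps$; in particular $S_z\Pi_n^\eps$ is a bounded periodic operator of trace-per-unit-volume class. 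Taking $\tau$ of both sides, the first term vanishes by cyclicity of $\tau$ on the commutator of two periodic operators (Proposition~\ref{prop:tau}\ref{lem: ctau}); the second term also vanishes because $\chi_L$ commutes with the multiplication operator $X_2$, so ordinary trace cyclicity on $\chi_L\,\cdot\,\chi_L$ yields $\Tr(\chi_L X_2 (S_z\Pi_n^\eps) \chi_L) = \Tr(\chi_L (S_z\Pi_n^\eps) X_2 \chi_L)$ at every $L$, whence $\tau([X_2, S_z\Pi_n^\eps]) = 0$. The remaining term is of order $\eps^{n+1}$ thanks to the uniform-in-$\eps$ bounds on $\|R_n^\eps\|$ and $\|\Pi_n^\eps\|$ from Theorem~\ref{thm:constr NEASS}, combined with the trace-per-unit-volume class property of $\Pi_n^\eps$ inherited from $\Pi_0$ via unitary conjugation.

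The principal technical obstacle lies in justifying $\tau([H_0, S_z\Pi_n^\eps]) = 0$: as stated in Proposition~\ref{prop:tau}\ref{lem: ctau}, cyclicity requires both operators to be bounded, whereas $H_0$ need not be bounded from above under Assumption~\ref{assum:H0}. Under the stronger Assumption~\ref{stronger} this issue disappears since $H_0$ itself is bounded, and the proof is complete. In the general (continuum) setting, I would pass to the BFZ representation: the gap condition ensures that on each fiber the relevant restriction of $S_z\Pi_n^\eps$ is trace class, so fiber-wise cyclicity of the ordinary trace yields the desired vanishing, in the spirit of the arguments already employed in Refs.~\onlinecite{MaPaTe, MarcelliMonaco}.
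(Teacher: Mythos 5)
Your decomposition is correct and the overall strategy — exploit $[X_2,S_z]=0$ together with the NEASS intertwining relation \eqref{eqn: NEASS prop} and cyclicity-type vanishing of $\tau$ on commutators — is the same as the paper's. The key identity
\[
\iu[H_0,S_z]\,\Pi_n^\eps = \iu[H_0,S_z\Pi_n^\eps] - \iu\eps[X_2,S_z\Pi_n^\eps] - \iu\eps^{n+1}S_z[R_n^\eps,\Pi_n^\eps]
\]
checks out (Leibniz rule plus $S_z[X_2,\Pi_n^\eps]=[X_2,S_z\Pi_n^\eps]$), the treatment of the $X_2$-term by commuting $\chi_L$ through $X_2$ and using ordinary trace cyclicity at fixed $L$ is a valid re-derivation of the lemma the paper cites for exactly this purpose (Ref.~\onlinecite[Prop.~A.3(iv)]{MarcelliMonaco}), and the remainder estimate is routine. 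Where you differ from the paper is in how the Hamiltonian term is organized: instead of applying the Leibniz rule to $[H_0,S_z\Pi_n^\eps]$, the paper first replaces $[H^\eps,S_z]\Pi_n^\eps$ by $\iu[\Pi_n^\eps H^\eps\Pi_n^\eps,\,\Pi_n^\eps S_z\Pi_n^\eps]$ up to $\Or(\eps^{n+1})$, and only then splits $H^\eps=H_0-\eps X_2$. This sandwiching is not cosmetic: $\Pi_n^\eps H_0\Pi_n^\eps$ is a \emph{bounded} periodic operator even when $H_0$ is unbounded (being, up to the bounded conjugation $\eu^{\iu\eps\SC_n^\eps}$, built from $\Pi_0 H_0\Pi_0$ with $H_0$ bounded below and $\Pi_0$ a spectral projection below $\mu$), so Proposition~\ref{prop:tau}\ref{lem: ctau} applies verbatim.

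This is precisely the one genuine soft spot in your argument, which to your credit you identify yourself: the vanishing of $\tau\left(\iu[H_0,S_z\Pi_n^\eps]\right)$ is not covered by Proposition~\ref{prop:tau}\ref{lem: ctau} when $H_0$ is unbounded, i.e.\ for continuum models under Assumption~\ref{assum:H0}. Your proof is complete as written under Assumption~\ref{stronger}, but the proposed fix in the continuum case (fiber-wise cyclicity in the BFZ representation) is only a sketch: on each fiber you must check that $S_z$ and $\Pi_n^\eps(k)$ preserve the domain of $H_0(k)$ and that both $H_0(k)S_z\Pi_n^\eps(k)$ and the closure of $S_z\Pi_n^\eps(k)H_0(k)$ are trace class before the fiber-wise trace of the commutator can be declared zero — commutators of an unbounded operator with a trace-class one are a classical source of trouble here. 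The cleaner repair, and the one implicitly used in the paper, is to insert the projections before invoking cyclicity: write $\tau(\iu[H_0,S_z]\Pi_n^\eps)=\tau(\iu[\Pi_n^\eps H_0\Pi_n^\eps,\Pi_n^\eps S_z\Pi_n^\eps])+\eps\,(\cdots)+\Or(\eps^{n+1})$, so that only bounded periodic operators ever enter the commutator whose trace per unit volume is claimed to vanish. With that modification your argument goes through in full generality.
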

\begin{proof}
We heavily rely on results proved in Ref.\ \onlinecite{MarcelliMonaco}. First of all, observe once again that $\iu [H_0, S_z] = \iu [H^\eps, S_z]$, as the linear potential $-\eps X_2$ commutes with spin. We therefore have
\[ \tau \left( \iu [H_0, S_z] \, \Pi^{\eps}_n \right) = \tau \left( \iu [H^\eps, S_z] \, \Pi^{\eps}_n \right) = \tau \left( \iu \left[\Pi^{\eps}_n \, H^\eps  \, \Pi^{\eps}_n, \Pi^{\eps}_n \, S_z  \, \Pi^{\eps}_n\right] \right) + \Or(\eps^{n+1}), \]
where we used cyclicity of the trace per unit volume (which is definitely satisfied in view of periodicity and of the trace-per-unit-volume class nature of the projection $\Pi^\eps_n$) and the fact that $[H^\eps, \Pi^\eps_n] = \Or(\eps^{n+1})$ in view of Theorem~\ref{thm:constr NEASS}. We now use the explicit form of $H^\eps = H_0 - \eps X_2$ to obtain
\[ \tau \left( \iu [H_0, S_z] \, \Pi^{\eps}_n \right) = \tau \left( \iu \left[\Pi^{\eps}_n \, H_0  \, \Pi^{\eps}_n, \Pi^{\eps}_n \, S_z  \, \Pi^{\eps}_n\right] \right) - \eps \, \tau \left( \iu \left[\Pi^{\eps}_n \, X_2 \, \Pi^{\eps}_n, \Pi^{\eps}_n \, S_z  \, \Pi^{\eps}_n\right] \right) + \Or(\eps^{n+1}) . \]
The first summand on the right-hand side of the above vanishes, since both operators in the commutator are bounded and of trace-per-unit-volume class (compare Ref.\ \onlinecite[Sec.\ 4.3]{MarcelliMonaco}). As for the second summand, we further rewrite
\[ - \eps \, \tau \left( \iu \left[\Pi^{\eps}_n \, X_2 \, \Pi^{\eps}_n, \Pi^{\eps}_n \, S_z  \, \Pi^{\eps}_n\right] \right) = - \eps \, \tau \left( \iu \left[X_2, \Pi^{\eps}_n \, S_z  \, \Pi^{\eps}_n\right] \right) + \eps \, \tau \left( \iu \left[(X_2)\su{OD}, \Pi^{\eps}_n \, S_z  \, \Pi^{\eps}_n\right] \right) \]
where $(X_2)\su{OD}$ refers now to the off-diagonal part of $X_2$ in the splitting of the Hilbert space induced by the orthogonal projection $\Pi^\eps_n$. 
The second summand in the above equality is once again the trace per unit volume of the commutator of two operatorators which are periodic, bounded, and trace-per-unit-volume class, and therefore vanishes. The first summand, instead, vanishes in view of Ref.\ \onlinecite[Prop.\ A.3(iv)]{MarcelliMonaco}. This concludes the proof.
\end{proof}

The above result generalizes Ref.\ \onlinecite[Corollary 5.5]{MaPaTe} where the vanishing of the spin torque in the NEASS is proved at first order in $\eps$.

By plugging the Taylor expansion \eqref{eqn: NEASS exp} for the NEASS in the above statement, we obtain in particular from Proposition \ref{prop:tau}\ref{it:tauAXex} that

\begin{corollary}
For any $n \in \N$, the trace per unit volume $\tau\left( \iu X_i \,[H_0, S_z] \, \Pi_n \right)$ does not depend on the choice of the origin, and in particular on the choice of the center of the cell $\FC_L$ in \eqref{eqn:defn FC1}.
\end{corollary}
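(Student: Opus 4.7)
The plan is to derive the corollary directly from Theorem~\ref{thm: spin torque} combined with Proposition~\ref{prop:tau}\ref{it:tauAXex}, applied to the operator $A := \iu [H_0, S_z] \, \Pi_n$. The latter proposition requires two hypotheses: periodicity of $A$ (together with the appropriate trace-class property of $\chi_L A \chi_L$), and vanishing of $\tau(A)$. If both hold, then $\tau(X_i A)$ is independent of the choice of origin, which is exactly the claim.

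Verifying periodicity is routine. The Hamiltonian $H_0$ is periodic by Assumption~\ref{assum:H0}, and $S_z = \Id_{L^2(\X)} \otimes s_z$ commutes with (magnetic) translations, hence is periodic; therefore $\iu [H_0, S_z]$ is periodic. Each coefficient $\Pi_k$ in the expansion \eqref{eqn: NEASS exp} is periodic as well: by Theorem~\ref{thm:constr NEASS} the $A_j$'s are periodic, so is $\SC_n^\eps$, and periodicity then propagates to every $\eps^k$-coefficient of $\eu^{\iu \eps \SC_n^\eps} \Pi_0 \eu^{-\iu \eps \SC_n^\eps}$ via a Baker--Campbell--Hausdorff expansion. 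The trace-per-unit-volume class condition on $\iu [H_0, S_z] \, \Pi_n$ follows from the same estimates that enter the proof of Theorem~\ref{thm: spin torque} in Ref.~\onlinecite{MarcelliMonaco}.

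To show $\tau(A) = 0$, I would extract Taylor coefficients from Theorem~\ref{thm: spin torque}. Plugging the expansion \eqref{eqn: NEASS exp} into the statement of that theorem (applied with any $m \geq n$) and using linearity of $\tau$ yields
\[ \sum_{k=0}^{m} \eps^k \, \tau\bigl(\iu [H_0, S_z] \, \Pi_k\bigr) + \eps^{m+1} \, \tau\bigl(\iu [H_0, S_z] \, \Pi\sub{reminder}^\eps\bigr) = \Or(\eps^{m+1}). \]
Since the $\Pi_k$'s are $\eps$-independent and the last trace is uniformly bounded in $\eps$, the left-hand side is a polynomial in $\eps$ of degree at most $m$ plus an $\Or(\eps^{m+1})$ remainder. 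As the total is $\Or(\eps^{m+1})$, every polynomial coefficient must vanish; selecting the $\eps^n$ coefficient yields $\tau(\iu [H_0, S_z] \, \Pi_n) = 0$, as desired. Proposition~\ref{prop:tau}\ref{it:tauAXex} then applies and gives the claimed independence from the origin, and in particular from the center of $\FC_L$.

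The only real subtlety is justifying the uniform-in-$\eps$ bound on $\tau(\iu [H_0, S_z] \, \Pi\sub{reminder}^\eps)$ needed to isolate the polynomial coefficients. This is not something one should grind through here: it is inherited from the uniform trace-per-unit-volume estimates on the remainder established during the NEASS construction in Ref.~\onlinecite{MarcelliMonaco}, exactly the same estimates that underpin the $\Or(\eps^{n+1})$ bound in Theorem~\ref{thm: spin torque}. Once that is granted, the argument is purely a matter of matching orders in $\eps$ and invoking the already proven Proposition~\ref{prop:tau}.
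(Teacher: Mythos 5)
Your proof is correct and follows essentially the same route the paper intends: plugging the expansion \eqref{eqn: NEASS exp} into Theorem~\ref{thm: spin torque}, matching powers of $\eps$ to conclude $\tau(\iu [H_0,S_z]\,\Pi_k)=0$ for each $k$, and then invoking Proposition~\ref{prop:tau}\ref{it:tauAXex}. The paper only sketches this in one sentence; you have supplied the same argument with the relevant hypotheses (periodicity, trace-per-unit-volume class, uniform boundedness of the remainder) made explicit.
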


Notice how, for $n=1$, the trace per unit volume mentioned in the above statement is exactly the one that computes $\sigma\sub{rot,12}\su{s}$, see \eqref{eqn:sigmaR}. In contrast, the conventional spin current operator $J\sub{1,conv}\su{s}$ has an expectation in the the NEASS $\Pi_n\su{\eps}$ whose definition as a trace per unit volume is never affected by these ambiguities, thanks to its periodicity. Therefore, we conclude from Theorem \ref{thm:notKubo} that the proper spin conductivity $\sigma\sub{prop,12}\su{s}$ is also well defined regardless of the choice of the origin.

\section{Equality between conductance and conductivity for charge and spin transport}
\label{sec: ssigma=sG}
This section is devoted to an explanation of the relation between conductance and conductivity in two-dimensional models, as transport coefficients for both charge and spin. 

\subsection{Equality between Hall conductivity and Hall conductance}
\label{ssec: Hs=HG}
To properly introduce charge conductance in the mathematical framework we have adopted thus far, we resort to the notion of switching functions introduced before, but this time as functions of a spatial coordinate%
\footnote{In this section, we follow the conventions of Ref.\ \onlinecite{MarcelliPanatiTauber19}, so that whenever the configuration space $\X \simeq \Gamma \times \FC_1$ is discrete the coordinate $x_j \equiv n_j$ of a point $x \in \X$ refers to the $j$-th coordinate of the corresponding element of $\Gamma$: in other words, points in the same unit cell have the same spatial coordinates. This affects the definition of the switching functions that we are about to give, as well as of position operators $X_j$.} %
$x_j$, $j \in \set{1,2}$. 

More precisely, consider a switching function $\lambda_j \colon \R \to [0,1]$ and fix $l_j > 0$. Set
\[ \Lambda_j(x_j) := \lambda_j \left( \frac{x_j}{2 \, l_j} - \frac{1}{2} \right). \]
Then, it is clear that $\Lambda_j$ is a smooth function satisfying
\[
\Lambda_j(x_j)=
\begin{cases}
0 & \text{if $x_j \le -l_j$,} \\
1 & \text{if $x_j \ge l_j$.}
\end{cases}
\]
With a slight abuse of notation, we denote by $\Lambda_j$ also the operator on $\Hi$ that multiplies by the function $\Lambda_j$. 

Charge conductance appears then as a linear response coefficient in the same setup introduced in Section \ref{sec:heuristic}. 
One considers an external perturbation $A = -\Lambda_2$, which can be interpreted as an electric potential of unit voltage drop over the interval $-l_2 \le x_2 \le l_2$ which then induces an electric field pointing in the positive 2-nd direction. Thus, the full perturbed time-dependent Hamiltonian \eqref{eqn: Heps(t)} reads now
\[
H^\eps(t):=H_0-\eps f(\eta t)\Lambda_2 ,\qquad \text{for } t\in \R,\, 0<\eta\ll 1.
\]
The unit voltage drop causes the generation of a charge current intensity in the $1$-st direction, which is represented by the operator $I_1\su{c}:=\iu [H_0,\Lambda_1]$. 
The linear response coefficient defined according to \eqref{eqn: deltaBA} is the \emph{transverse charge conductance}:
\begin{equation*}
G_{12}\su{c}:=\Delta_{I_1\su{c},-\Lambda_2}.
\end{equation*}

A more tractable expression for $G_{12}\su{c}$ can be once again derived assuming the Kubo formula \eqref{eqn: DeltaAB Kubo}. Under the crucial hypothesis of the existence of a spectral gap around the Fermi energy, one can obtain from Ref.\ \onlinecite[Sect.3.(ii)]{Graf07} (see also Ref.\ \onlinecite[Chap. 1]{Marcelli18} which combines the previous reference with methods from Ref.\ \onlinecite{ElgartSchlein04})
\begin{equation}
\label{eqn: GDCF}
\Delta_{I_1\su{c},-\Lambda_2}\su{Kubo}=\iu\lim_{\eta\to0^+}\int_{-\infty}^{0}\di s\, \eu^{\eta s} \Tr\left(  I_1\su{c}\,\mathrm{e}^{\iu s H_0}[\Lambda_2,\Pi_0]\mathrm{e}^{-\iu s H_0}\right)=G\sub{Hall},
\end{equation}
with
\begin{equation}
\label{eqn:GHall}
G\sub{Hall}:=\iu \Tr(\Pi_0\left[[\Pi_0,\Lambda_1],[\Pi_0,\Lambda_2]\right]).
\end{equation}
It is worth mentioning that, in this setting, the LRA in the sense of \eqref{eqn: DeltaAB Kubo} has been put on a firm basis in Ref.\ \onlinecite{ElgartSchlein04} (see also Ref.\ \onlinecite{Marcelli21}), by coupling the time-adiabatic and zero-field limit setting the adiabatic parameter $\eta$ equal to the intensity $\eps$ of the perturbation. 

At this level we have obtained two response coefficients: a transverse \emph{conductivity}, which measures the response of an electric current to a perturbing field (and thus is the ratio of two intensive quantities), and a transverse \emph{conductance}, which is instead the current intensity per unit voltage drop (and thus is the ratio of two extensive quantities). From the analysis of their physical dimensions, one can observe that for a $d$-dimensional sample
\[ 
[G] = \frac{[j] \,\text{[length]}^{d-1}}{[E] \,\text{[length]}} = [\sigma] \, [\text{length}]^{d-2},
\]
and therefore it is evident that for $d=2$ their physical dimensions coincide. More can be proved: the two transport coefficients actually have the same value. This is obtained by comparing the two DCFs, namely \eqref{eqn: sigmahall} for $\sigma_{12}\su{c}$ and \eqref{eqn:GHall} for $G_{12}\su{c}$. Proofs of the identity
\begin{equation}
\label{eqn:DFCseq}
\iu \tau(\Pi_0\left[[\Pi_0,X_1],[\Pi_0,X_2]\right])=\iu \Tr(\Pi_0\left[[\Pi_0,\Lambda_1],[\Pi_0,\Lambda_2]\right]).
\end{equation}
are available in the literature, depending on specific details of the quantum model: In Ref.\ \onlinecite[Sect. 6]{AvronSeilerSimon} the above equality is proved for continuum models with covariant Fermi projections, associated with the Fermi energy in a spectral gap, while in Ref.\ \onlinecite[Lemma 8 \& Remark 3]{ElgartGrafSchenker} the same result is achieved in the discrete setting assuming only a mobility gap and without requiring any covariance structure.

\subsection{Spin conductance and spin conductivity: equality of Kubo-like terms}

At this point one might wonder if

\medskip

\noindent
\textbf{Question (Q$\mathbf{_{3}}$)} For spin transport, is it still possible to prove an analogous equality in two dimension between the respective quantities, namely is it possible to prove that spin conductivity $\sigma\su{s}_{12}$ agrees with spin conductance $G\su{s}_{12}$?

\medskip

\noindent
As was already stressed, the question becomes challenging in the spin-noncommuting case. A partial answer to Question \textbf{(Q$\mathbf{_{3}}$)} is given in Ref.\ \onlinecite{MarcelliPanatiTauber19}. 
In this work, the analysis is restricted to certain discrete models for crystalline solids: here, to illustrate the main ideas within it, we take a further simplification and adopt Assumption \ref{stronger} in the following.

\noindent
First we make a choice for the spin current operator to take into account, namely the proper spin current operator $J\sub{prop,1}\su{s}$ defined in \eqref{eqn: J_proper}. Similarly, the operator $I\sub{prop,1}\su{s}:=\iu[H_0,\Lambda_1 S_z]$ stands for the proper spin current intensity in the $1$-st direction. Hereinafter, since only operators of the ``proper'' guise are considered, we will omit the corresponding adjective from the notation.

As was explained in Section \ref{ssec: ssigma}, the expression for $\sigma_{12}\su{s}$ looks in general very different from a DCF. 
Analogous considerations can be done for the spin conductance. A full comparison of spin conductivity and spin conductance is complicated by the more cumbersome expressions for these transport coefficients. In Ref.\ \onlinecite{MarcelliPanatiTauber19} the authors were therefore led to single out the ``Kubo-like'' terms in their expressions, namely those which more resemble a DCF for the corresponding quantity. 
More precisely, we define
\begin{enumerate}[label={\rm (\alph*)},ref={\rm (\alph*)}]
\item \label{def:sigma_K} the \emph{Kubo-like spin conductivity} as 
\begin{equation*} 
	{\sigma}_K\su{s}:=\tau (\Sigma_K\su{s}) \quad\text{ with }\quad \Sigma_K\su{s}:=\iu \Pi_0 \big[ [\Pi_0, X_1 S_z], [\Pi_0,X_2] \big] \Pi_0\,;
\end{equation*}
\item \label{def:G_K} the \emph{Kubo-like spin conductance}  as
\begin{align*}
&G_K\su{s}(\Lambda_1,\Lambda_2):=\lim_{\substack{L\to\infty\\L\in 2\N+1}} \Tr(\,\chi_{1,L}\, \G_K\su{s}(\Lambda_1,\Lambda_2)\,\chi_{1,L}\,)\\[2mm]
& \text{ with } \chi_{1,L}:=\chi_{\{x\in \Gamma : \abs{x_1}\leq L/2\}},\quad \G_K\su{s}(\Lambda_1,\Lambda_2):=  \iu \Pi_0  \big[ [\Pi_0, \Lambda_1 S_z], [\Pi_0,\Lambda_2] \big]\Pi_0.
\nonumber
\end{align*}
\end{enumerate}

Notice a clear analogy with the corresponding DCFs for the Hall conductivity \eqref{eqn: sigmahall} and Hall conductance \eqref{eqn:GHall}, where, as was commented below the statement of Theorem \ref{thm: sigmahall}, there appear operators related to the observable ($X_1 S_z$ resp.\ $\Lambda_1 S_z$) and to the perturbation ($X_2$ resp.\ $\Lambda_2$).

By adapting ideas and techniques presented in Ref.\ \onlinecite[Sect. 6]{AvronSeilerSimon} from the continuum case (i.e.\ for covariant Schr\"odinger operators on the plane) to the discrete setting and to the spin-noncommuting case, and by using some results from Ref.\ \onlinecite{ElgartGrafSchenker} to establish trace class properties of the operators considered, we have that

\begin{theorem}
\label{thm:main2}
Let $H_0$ and $\Pi_0$ be as in Assumption~\ref{stronger}.
\begin{itemize}
\item[$-$]  The operator $\Sigma_K\su{s}$ satisfies
\begin{equation}
\label{eqn:Sigma notper: per+oddper}
{\left(\Sigma_K\su{s}\right)}_{\mathbf{m},\mathbf{n}}={\left(\Sigma_K\su{s}\right)}_{\mathbf{m-p},\mathbf{n-p}}- p_1 \, {\left(\mathcal{T}\sub{s}\right)}_{\mathbf{m-p},\mathbf{n-p}}\quad\text{for all $\mathbf{m},\mathbf{n},\mathbf{p}\in\Z^2$},
\end{equation}
where $\mathcal{T}\sub{s}$ is the \emph{spin torque-response operator}
\begin{equation}
\label{eqn:torqueresp}
\mathcal{T}\sub{s}:=\iu \Pi_0 \big[ [\Pi_0,S_z], [\Pi_0,X_2] \big]\Pi_0.
\end{equation}
Moreover,  ${\sigma}_K\su{s}$ is well-defined and satisfies
$$
{\sigma}_K\su{s}=\Tr(\chi_1\Sigma_K\su{s} \chi_1).
$$

\vspace{2mm}
\item[$-$]
Let $\Lambda_2$ be a fixed switching function in the $2$-nd direction. 
Assume that  $G_K\su{s}(\Lambda_1,\Lambda_2)$ is finite for at least one switching function $\Lambda_1$. 
Then $G_K\su{s}(\Lambda_1^{\prime} ,\Lambda_2)$ 
 is finite for any switching function $\Lambda_1^{\prime}$, and it is independent of the choice of $\Lambda_1^{\prime}$. 

\vspace{2mm}
\item[$-$] 
Finally, the equality
\begin{equation}
\label{eqn:KG12s=Ksigma12s}
\sigma_K\su{s} = G_K\su{s}( \Lambda_1, \Lambda_2)
\end{equation}
holds true. In particular, $G_K\su{s}$ is finite and independent of the choice of the switch functions $\Lambda_1,\Lambda_2$ in both directions.
\end{itemize}
\end{theorem}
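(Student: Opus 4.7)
My plan is to address the three bullets of Theorem~\ref{thm:main2} sequentially, with the first bullet providing a structural identity used in the third; the second is an intermediate computation. The overall line of attack is to adapt to the discrete, spin-noncommuting setting the arguments of Avron--Seiler--Simon (Ref.~\onlinecite{AvronSeilerSimon}, Sect.~6) and Elgart--Graf--Schenker (Ref.~\onlinecite{ElgartGrafSchenker}), using the off-diagonal decay of $\Pi_0$ induced by the spectral gap (Assumption~\ref{stronger}) to justify the trace-class manipulations, and using the quasi-periodicity of $\Sigma_K\su{s}$ from the first bullet to control the anomaly coming from $[\Pi_0, S_z] \neq 0$.

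For the first bullet, I expect a direct algebraic computation. Under conjugation by the lattice translation $T_\mathbf{p}$, the periodic operators $\Pi_0$, $X_2$ and $S_z$ are invariant, while $X_1 \mapsto X_1 - p_1$. Consequently $[\Pi_0, X_1 S_z]$ picks up the extra term $-p_1 [\Pi_0, S_z]$, which propagates through the outer commutator with $[\Pi_0, X_2]$ into exactly the correction $-p_1 \mathcal{T}\sub{s}$ required by \eqref{eqn:Sigma notper: per+oddper}. The identity $\sigma_K\su{s} = \Tr(\chi_1 \Sigma_K\su{s} \chi_1)$ then follows from \eqref{eqn:defn tau}: applying \eqref{eqn:Sigma notper: per+oddper} with $\mathbf{p} = \mathbf{m}$, the diagonal entry $(\Sigma_K\su{s})_{\mathbf{m}, \mathbf{m}}$ differs from $(\Sigma_K\su{s})_{\mathbf{0}, \mathbf{0}}$ by a contribution linear in $m_1$; summing over $\mathbf{m} \in \FC_L \cap \Gamma$ the linear piece averages to zero by the symmetry of the centered cell, and normalization by $|\FC_L|$ extracts the unit-cell trace.

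For the second bullet, set $F := \Lambda_1^{\prime} - \Lambda_1$, which is compactly supported. The operator $M := \iu \Pi_0 [[\Pi_0, F S_z], [\Pi_0, \Lambda_2]] \Pi_0$ is trace class: the compact support of $F$ combined with the Combes--Thomas-type off-diagonal decay of $\Pi_0$ (granted by the spectral gap; compare Ref.~\onlinecite{ElgartGrafSchenker}) makes $[\Pi_0, F S_z]$ trace class, and multiplication by the bounded operator $[\Pi_0, \Lambda_2]$ preserves this. Hence $\lim_L \Tr(\chi_{1,L} M \chi_{1,L}) = \Tr(M)$, and it suffices to prove $\Tr(M) = 0$. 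Expanding the nested commutators and repeatedly applying cyclicity of the trace together with the identity $\Pi_0 [\Pi_0, A] \Pi_0 = 0$, the computation collapses to $\Tr(M) = \iu \Tr(\Pi_0 [\Lambda_2, F S_z])$. Since $\Lambda_2$ is multiplication by a function of $x_2$ alone while $F S_z$ is multiplication by a function of $x_1$ tensored with a spin matrix, the two operators commute, so $[\Lambda_2, F S_z] = 0$ and $\Tr(M) = 0$.

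The third bullet, the equality $\sigma_K\su{s} = G_K\su{s}(\Lambda_1, \Lambda_2)$, is the main obstacle. The idea is to compare the two objects at the level of matrix entries and convert one into the other by a telescoping procedure. Using the freedom granted by the second bullet, I would choose $\Lambda_1$ (and by symmetry a convenient $\Lambda_2$) close to step-function profiles $\chi_{\{x_j \geq 0\}}$, smoothed on bounded transition regions. The entries of $\G_K\su{s}(\Lambda_1, \Lambda_2)$ then involve indicator functions on half-spaces, and a discrete analog of the Avron--Seiler--Simon identity expresses the factor $(n_1 - m_1)$ appearing in $\Sigma_K\su{s}$ as a sum over an integer parameter $k_1 \in [-L/2, L/2]$ of step-function differences $\Lambda_1(n_1 - k_1) - \Lambda_1(m_1 - k_1)$, and similarly in the $2$-direction. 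After exchanging sums, using periodicity of $\Pi_0$ to reduce the sum in the $2$-direction to a single unit cell, and invoking the Elgart--Graf--Schenker trace-class estimates to justify all reorderings, the cutoff trace $\Tr(\chi_{1,L} \G_K\su{s} \chi_{1,L})$ is shown to equal $\Tr(\chi_1 \Sigma_K\su{s} \chi_1)$ up to a boundary error proportional to $\mathcal{T}\sub{s}$. This anomalous contribution, being linear in the boundary coordinate, averages out in the limit $L \to \infty$ exactly as in the proof of the first bullet. Once the equality $\sigma_K\su{s} = G_K\su{s}(\Lambda_1, \Lambda_2)$ is established, finiteness of $G_K\su{s}$ in general and its independence from both switching functions follow: from $\Lambda_1$ by the second bullet, and from $\Lambda_2$ because $\sigma_K\su{s}$ involves no switching function at all.
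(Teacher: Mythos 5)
Your treatment of the first bullet is essentially correct and matches the intended argument: conjugation by $T_{\mathbf{p}}$ leaves $\Pi_0$, $S_z$ and $[\Pi_0,X_2]$ invariant while sending $X_1\mapsto X_1-p_1$, which produces exactly the $-p_1\,\mathcal{T}\sub{s}$ correction in \eqref{eqn:Sigma notper: per+oddper}, and the $m_1$-odd term cancels over the symmetric cell $\FC_L$. The Avron--Seiler--Simon telescoping in the third bullet is also the right overall strategy. However, your second bullet contains a genuine gap. The operator $[\Pi_0,FS_z]$ with $F=\Lambda_1'-\Lambda_1$ is \emph{not} trace class: $F$ is compactly supported only as a function of $x_1$, so as a multiplication operator it lives on an infinite strip, and the kernel of the commutator is not even square-summable along the $x_2$-direction of that strip. ($M$ is trace class, but because the two commutator factors are localized in \emph{transverse} strips, not because one factor is trace class.) More seriously, the ``collapse by cyclicity'' to $\iu\Tr(\Pi_0[\Lambda_2,FS_z])$ silently discards the term $\iu\Tr\bigl([\Pi_0FS_z\Pi_0,\Pi_0\Lambda_2\Pi_0]\bigr)$, whose vanishing does not follow from cyclicity since neither factor is trace class; the identical manipulation applied to $G\sub{Hall}$ in \eqref{eqn:GHall} would ``prove'' that the Hall conductance is always zero. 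That $F$ is a bump rather than a switch is what saves the statement, but this requires an actual argument (a controlled resummation of matrix elements, or a trace-norm homotopy $F\to0$ as in Ref.\ \onlinecite{ElgartGrafSchenker}), not free cyclicity.

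The second, and more fundamental, gap is in the third bullet. The anomalous contribution produced by the telescoping is not ``linear in the boundary coordinate'' and cannot be disposed of by the parity argument of the first bullet. As recorded in \eqref{eqn:KG=Ksigma}, the discrepancy between $G_K\su{s}(\Lambda_1,\Lambda_2)$ and $\sigma_K\su{s}$ is $\tfrac12\sum_{|m_1|\le L/2}\tau(\mathcal{T}\sub{s})$, a sum of \emph{constant} terms which diverges linearly in $L$ unless $\tau(\mathcal{T}\sub{s})=0$. The vanishing of the spin torque response $\tau(\mathcal{T}\sub{s})$ is a separate, nontrivial input --- it is precisely the first-order instance of Theorem \ref{thm: spin torque} (see also Ref.\ \onlinecite[Prop.\ 5.4]{MaPaTe}) --- and your proposal never establishes or even invokes it. Without this ingredient the limit defining $G_K\su{s}$ cannot be shown to exist, let alone to equal $\sigma_K\su{s}$; with it, your telescoping scheme does close.
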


We find interesting to stress a point from the proof of the above Theorem. In the argument, one shows that
\begin{align}
\label{eqn:KG=Ksigma}
G_K\su{s}( \Lambda_1, \Lambda_2)=\sigma_K\su{s}+\frac{1}{2}\lim_{\substack{L\to\infty \\ L\in 2\N+1}}\sum_{\substack{m_1\in\Z\\ \abs{m_1}\leq L/2}}\tau(\mathcal{T}\sub{s}),
\end{align}
where $\tau({\mathcal{T}\sub{s}})$ is the mesoscopic average of the spin torque-response operator defined in the statement. 
In the above, the second summand is a series of constant terms, which is either zero if $\tau(\mathcal{T}\sub{s})=0$, or divergent otherwise. 
Under the assumptions in the statement one can indeed show that $\tau(\mathcal{T}\sub{s})=0$.

\noindent
The terminology of ``spin torque-response'' adopted for $\tau({\mathcal{T}\sub{s}})$ is based on the observation that this average is nothing but the linear response coefficient $\Delta_{\iu [H_0,S_z],-X_2}\su{Kubo}$ of the spin torque $\iu [H_0,S_z]$ induced by a constant electric field in the second direction, according to the formula \eqref{eqn: DeltaAB Kubo}. Equivalently, its expression can be computed by the NEASS $\Pi_1^\eps$ at the first order, see Ref.\ \onlinecite[Proposition 5.4]{MaPaTe}. In view of our new result, Theorem \ref{thm: spin torque} above, the spin torque-response actually vanishes at all orders in $\eps$. 

\begin{acknowledgments}
This work was supported by the National Group of Mathematical Physics (GNFM--INdAM) within the project Progetto Giovani GNFM 2020. G.~M.\ gratefully acknowledges the financial support from the European Research Council (ERC), under the European Union's Horizon 2020 research and innovation programme (ERC Starting Grant MaMBoQ, no. 802901).
\end{acknowledgments}

\section*{Data Availability Statement}

Data sharing is not applicable to this article as no new data were created or analyzed in this study.

\nocite{*}
\bibliography{biblio}

\end{document}